\theoremstyle{plain}
\newtheorem{thm}{Theorem}
\newtheorem{lem}{Lemma}
\newtheorem{prop}{Proposition}
\theoremstyle{definition}
\newtheorem{defn}{Definition}
\newtheorem{prob}{Problem}
\theoremstyle{remark}
\newtheorem{rem}{Remark}
\newcommand{\argmin}{\mathrm{argmin}}
\newcommand{\R}{\mathbb{R}}
\newcommand{\one}{\mathbf{1}}
\newcommand{\wm}{W_\textrm{M}}
\newcommand{\wadmm}{W_\textrm{admm}}
\newcommand{\wtadmm}{\hat{W}_\textrm{admm}}
\newcommand{\wbadmm}{\bar{W}_\textrm{admm}}
\DeclareMathOperator{\tr}{Tr}
\newcolumntype{L}[1]{>{\centering\let\newline\\\arraybackslash\hspace{0pt}}m{#1}}
\begin{document}
%
\title{Distributed computation of fast consensus weights using ADMM}
%
%
%

\author{Kiran Rokade and Rachel Kalpana Kalaimani
\thanks{This work has been partially supported by DST-INSPIRE Faculty Grant, Department of Science and Technology (DST), Govt. of India (ELE/16-17/333/DSTX/RACH).}
\thanks{Kiran Rokade is with the Department of Electrical and Computer Engineering, Cornell University and Rachel Kalpana Kalaimani is with the Department of Electrical Engineering, Indian Institute of Technology Madras (e-mail: kvr36@cornell.edu, rachel@ee.iitm.ac.in)}}

\maketitle

\begin{abstract}
We consider the problem of achieving average consensus among multiple agents, where the inter-agent communication network is depicted by a graph. We consider the discrete-time consensus protocol where each agent updates its value as a weighted average of its own value and those of its neighbours. Given a graph, it is known that there exists a set of `optimal weights' such that the agents reach average consensus asymptotically with an optimal rate of convergence. However, existing methods require the knowledge of the entire graph to compute these optimal weights. We propose a method for each agent to compute its set of optimal weights locally, i.e., each agent only has to know who are its neighbours. The method is derived by solving a matrix norm minimization problem subject to linear constraints in a distributed manner using the Alternating Direction Method of Multipliers (ADMM). We illustrate our results using numerical examples and compare our method with an existing method called the Metropolis weights, which are also computed locally. 
\end{abstract}

\begin{IEEEkeywords}
Consensus algorithm, distributed optimization, ADMM.
\end{IEEEkeywords}

%
\IEEEpeerreviewmaketitle

\section{Introduction}

In a consensus problem, a group of agents seeks to agree upon a certain quantity of interest. In many applications, this quantity of interest is the average of the initial states of the agents \cite{1440896,CYBENKO1989279}. Consensus problems arise in various settings such as multi-agent formation control \cite{1333200}, estimating a parameter using a network of sensors \cite{1440896}, workload balancing for distributed computation \cite{CYBENKO1989279}, etc. 
These problems have been widely studied in control theory \cite{1239709,1333204,XIAO200465,1431045}.

In all consensus protocols, an important parameter is the time required to reach average consensus. One line of literature focuses on algorithms which achieve consensus in finite time \cite{4282726,5404832,5160026}. The other line of literature looks at algorithms which achieve consensus asymptotically, while trying to improve the rate of this asymptotic convergence \cite{XIAO200465,5707021}. For continuous-time consensus protocols, it is known that the rate of convergence depends on the second-smallest eigenvalue of the graph Laplacian \cite{1333204}, while for weighted discrete-time protocols, it depends on the second-largest absolute value of all eigenvalues of the weight matrix \cite{XIAO200465}. It is then natural to optimize this rate of convergence by appropriately choosing the edge and node weights used by the agents in their update laws for reaching consensus. 

The problem of optimizing the rate of convergence of distributed linear discrete-time consensus protocols has been widely studied in literature and is sometimes referred to as the fastest distributed linear averaging (FDLA) problem \cite{XIAO200465}. In \cite{XIAO200465}, the authors consider the general weighted discrete-time consensus protocol, with possibly asymmetric edge weights. They propose two metrics for quantifying the rate of convergence: the asymptotic convergence factor and the per-step convergence factor. It is shown that the problem of optimizing the per-step convergence factor is a convex optimization problem and hence can be solved efficiently. In \cite{5707021}, the authors consider consensus protocols with symmetric edge weights. Considering the edge and node weights as variables, they try to minimize the distance between the largest and the second-smallest eigenvalue of the weighted graph Laplacian. This in-turn optimizes the rate of convergence of the consensus protocol. In \cite{1576862}, the authors consider the problem of maximizing the second-smallest eigenvalue of the graph Laplacian where the edge weights are functions of the agent values. They propose an iterative method which gives optimal trajectories of agent values by solving an optimization problem at each iteration. Another class of problems which falls under the FDLA framework is that of fastest mixing Markov chains \cite{doi:10.1137/S0036144503423264}. Here, the equation governing the evolution of the probability distribution is similar to the discrete-time consensus protocol. The objective is to find a transition probability matrix, analogous to the weight matrix in a consensus protocol, such that an initial probability distribution converges to a uniform distribution as fast as possible.

While the above methods give weights that achieve the optimal rate of convergence in a consensus protocol, finding these weights require the knowledge of the entire network topology. In other words, there has to be a central entity which has the knowledge of the entire network, can calculate the optimal weights and then broadcast them to the agents. Firstly, there may not exist such a central entity due to constraints on the communication resources or for security reasons. Secondly, even if it does exist, this process of computing and broadcasting the weights has to be repeated every time the network topology changes due to say node failure, edge failure or if new agents enter the network. This can be very inefficient if the network is large. In the spirit of distributed computation, we would want the agents to locally compute these weights which they use in their update laws. An example of locally computed weights is the Metropolis weights \cite{1440896}. To compute them, an agent has to know only the degree of itself and its neighbours. While these weights do achieve consensus, the rate of convergence of the consensus protocol with these weights is not optimal. We seek to have the best of both worlds: locally computed weights whose rate of convergence is optimal. 

A solution to the problem of locally computing the optimal consensus weights has been attempted in \cite{6887328}. There, the authors use a distributed gradient-descent approach on the Schatten $p$-norm of the weight matrix, which is an approximation of the second-largest eigenvalue of the weight matrix. Computing the gradient requires the agents to communicate with its neighbours which are up to $p/2$ hops away. There is a trade-off between locality and optimality: smaller values of $p$ give sub-optimal solutions. We propose an algorithm where the agents need to communicate only with their immediate neighbours to compute the consensus weights. Moreover, the rate of convergence of the consensus protocol using the weights obtained by our algorithm can be arbitrarily close to the optimal value. The algorithm is derived using the Alternating Direction Method of Multipliers (ADMM) \cite{10.1561/2200000016,Eckstein2012}. We summarize our contributions below.

\subsection{Contributions}

\begin{enumerate}
    \item We consider agents which communicate with each other over a network depicted by an undirected graph. We propose an iterative algorithm for the agents to locally compute the weights required for the weighted-average discrete-time consensus protocol. We show that the rate of convergence of the consensus protocol using these weights converges to the optimal value. The algorithm is derived by solving a matrix norm minimization problem in a distributed manner using ADMM. The result is given in Section \ref{sec:result}.
    \item For some applications, we propose a variation of our algorithm, where, at every iteration, the agents compute the weights and update their values using these weights. Using a numerical example, we show that this variation of our algorithm performs better in terms of the rate of convergence of the consensus protocol than the locally computed Metropolis weights. This is done in Section \ref{subsec:admm_live_example}.
\end{enumerate}

The rest of the paper is organized as follows. We next summarize some basic notations which will be used throughout the paper. In Section \ref{sec:prob_form}, we mathematically formulate the problem of computing the optimal consensus weights. In Section \ref{sec:result}, we propose our algorithm to locally compute these optimal weights. We also state our main result which proves the convergence of this algorithm. The proof of the result is deferred to the appendix. In Section \ref{sec:examples}, we present some numerical examples to illustrate the convergence of our algorithm and to compare its performance with the Metropolis weights. Finally, we give some concluding remarks in Section \ref{sec:conclusion}.

\subsection{Notation}

All vectors are of length $n$ and all matrices are of size $n \times n$. $\one$ is the vector $\begin{bmatrix} 1 &\dots &1 \end{bmatrix}^T$ of all ones. $e_i$ is the $i^\text{th}$ standard basis vector with $1$ as the $i^\text{th}$ entry and zeros elsewhere. For a vector $x$: $x_i$ or $(x)_i$ denotes its $i^\text{th}$ element, $||x||$ denotes its 2-norm. For a matrix $A$: $\rho(A)$ denotes its spectral radius or the maximum absolute eigenvalue, $||A||$ denotes its induced 2-norm or the maximum singular value, $||A||_F$ denotes its Frobenius norm, $A_{ij}$ or $(A)_{ij}$ denotes its $(i,j)^{\text{th}}$ element, $\tr(A)$ denotes its trace, $A > 0$ denotes $A_{ij} > 0$ for all $(i,j)$. For a set $B$, $|B|$ denotes its cardinality. For a function $f: \R^n \rightarrow \R$, $(\partial/\partial x)f = \begin{bmatrix} (\partial/\partial x_1)f &\dots &(\partial/\partial x_n)f \end{bmatrix}^T$ denotes the gradient of $f$ with respect to the vector $x \in \R^n$. For a function $f: \R^{n \times n} \rightarrow \R$, $$(\partial/\partial X)f = \begin{bmatrix} (\partial/\partial X_{11})f &\dots &(\partial/\partial X_{1n})f \\ \vdots &\ddots &\vdots \\ (\partial/\partial X_{n1})f &\dots &(\partial/\partial X_{nn})f \end{bmatrix}$$ denotes the gradient of $f$ with respect to the matrix $X \in \R^{n \times n}$.

\section{Problem Formulation}
\label{sec:prob_form}

\subsection{The average consensus problem}

Consider a set $V = \{1, \dots, n\}$ of $n$ agents, each having a scalar initial value $x_i(0) \in \R, i \in \{1, \dots, n\}$. Assume that the inter-agent communication is governed by an undirected, connected graph $G = (V,E)$, where $E$ is the set of all undirected edges of $G$. In other words, agents $i$ and $j$ can exchange values with each other if and only if $(i,j) \in E$. Note that an agent always knows its own value, hence, for the ease of notation, we assume that $(i,i) \in E$ for all $i \in \{1, \dots, n\}$. Let $N_i = \{j \in V: (i,j) \in E\}$ be the set of all neighbours of agent $i$. 

We want the agents to reach average consensus, i.e., each agent must compute the average $x_\textrm{avg}(0) = (1/n) \sum_{i = 1}^n x_i(0)$ of the initial values in a distributed manner by communicating only with its neighbours. We consider the distributed linear iterative protocol where each agent $i$ updates its value as
\begin{align}
    \label{eq:algo_elementwise}
    x_i(t+1) = W_{ii} x_i(t) + \sum_{j \in N_i} W_{ij} x_j(t), \ i \in \{1, \dots, n\},
\end{align}
where $t \in \{0, 1, 2, \dots$\} is the time and $W_{ij}$ is the weight assigned by agent $i$ to agent $j$'s value. Due to the communication constraint imposed by the graph, we fix $W_{ij} = 0$ if $(i,j) \notin E$. 
\begin{rem}
    \label{rem:asym_W}
    Note that since the graph $G$ is undirected, $(i,j) \in E$ implies $W_{ij}$ and $W_{ji}$ can both be nonzero. However, they can take different values in general. 
\end{rem}
Now, the protocol \eqref{eq:algo_elementwise} can be written as
\begin{align}
    \label{eq:algo_vector}
    x(t+1) = W x(t),
\end{align}
where $W \in \R^{n \times n}$ is called the \emph{weight matrix} and $x(t) = \begin{bmatrix} x_1(t) &\dots &x_n(t) \end{bmatrix}^T \in \R^n$ is the vector of agent values at time $t$. We refer to \eqref{eq:algo_vector} as the \emph{consensus protocol}. Let $\bar{x} = x_\textrm{avg}(0) \one$ be the vector with $x_\textrm{avg}(0)$ as all its entries. From \eqref{eq:algo_vector}, we have $x(t) = W^t x(0)$. Thus, the agents reach average consensus, i.e., $\lim_{t \rightarrow \infty} x(t) = \bar{x}$ if and only if 
\begin{align}
    \label{eq:W_converges_one_one^T/n}
    \lim_{t \rightarrow \infty} W^t = \one \one^T/n.
\end{align}
Following result from \cite{XIAO200465} gives a necessary and sufficient condition for \eqref{eq:W_converges_one_one^T/n} to be true.
\begin{prop}{\cite{XIAO200465}}
    \label{prop:boyd_W_condition}
    The consensus protocol \eqref{eq:algo_vector} reaches average consensus, i.e., $\lim_{t \rightarrow \infty} x(t) = \bar{x}$ if and only if 
    \begin{align}
    \label{eq:W_consensus_condition}
        W \one = \one, \ W^T \one = \one, \ \rho(W - \one \one^T/n) < 1.
    \end{align}
\end{prop}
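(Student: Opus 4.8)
The plan is to argue entirely through the reformulation already noted in the excerpt: the protocol reaches average consensus from every initial condition if and only if $W^t \to \one\one^T/n$ as $t \to \infty$. I would write $J = \one\one^T/n$, note that $J$ is symmetric with $J^2 = J$, and first record the two elementary equivalences $WJ = J \iff W\one = \one$ and $JW = J \iff W^T\one = \one$; these follow by writing $WJ = (W\one)\one^T/n$ and $JW = \one(W^T\one)^T/n$ and comparing with $J$.

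For the ``if'' direction I would assume the three conditions in \eqref{eq:W_consensus_condition}. From $W\one = \one$ and $W^T\one = \one$ one gets $WJ = JW = J$, so with $M := W - J$ we have $JM = MJ = 0$ and $J^2 = J$. Expanding $W^t = (J+M)^t$, every term containing at least one factor $J$ and at least one factor $M$ vanishes, whence $W^t = J + M^t$ for all $t \ge 1$. Since $\rho(M) = \rho(W - \one\one^T/n) < 1$ we have $M^t \to 0$, hence $W^t \to J$ and average consensus holds.

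For the ``only if'' direction I would suppose $W^t \to J$. Letting $t \to \infty$ in $W^{t+1} = W\,W^t$ and in $W^{t+1} = W^t W$ yields $WJ = J$ and $JW = J$, i.e.\ $W\one = \one$ and $W^T\one = \one$. The decomposition $W^t = J + M^t$ from the previous paragraph then applies again, and $W^t \to J$ forces $M^t \to 0$, which is equivalent to $\rho(W - \one\one^T/n) = \rho(M) < 1$; this establishes all three conditions.

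The only mildly delicate points I expect are the cross-term cancellation in $(J+M)^t$ — which I would justify by observing that any length-$t$ word in the letters $J, M$ that uses both letters contains $JM$ or $MJ$ as a consecutive sub-product and hence equals zero — and the standard fact that $M^t \to 0$ if and only if $\rho(M) < 1$, which I would cite as known (e.g.\ via the Jordan canonical form). Beyond these, the argument is routine linear algebra, so I do not anticipate a genuine obstacle.
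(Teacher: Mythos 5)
Your argument is correct and is essentially the standard proof of this result: the paper itself does not prove the proposition (it cites \cite{XIAO200465}), and the proof there uses exactly your decomposition $W^t = \one\one^T/n + (W-\one\one^T/n)^t$ together with the fact that $M^t \to 0$ iff $\rho(M)<1$. All the steps you flag as delicate (the cross-term cancellation and the spectral-radius criterion) are handled correctly, so nothing further is needed.
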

Note that the weights $W_{ij}$ are allowed to be negative in general. Condition \eqref{eq:W_consensus_condition} ensures that in the consensus protocol given by \eqref{eq:algo_vector}:
\begin{itemize}
    \item the average $\one^T x(t)/n$ of the agent values is invariant across time $t$,
    \item any vector in the `agreement space', i.e., in the span of $\one$, is invariant with respect to $W$,
    \item all eigenvalues of $W$ other than the eigenvalue $1$ are strictly inside the unit circle.
\end{itemize}
Given that $W$ satisfies \eqref{eq:W_consensus_condition}, Proposition \ref{prop:boyd_W_condition} guarantees $\lim_{t \rightarrow \infty} x(t) = \bar{x}$. We are interested in maximizing the rate at which this convergence occurs. 

\subsection{Optimal weights}

Define $e(t) = x(t) - \bar{x}$ as the consensus error. Then, from the consensus protocol \eqref{eq:algo_vector}, we can write the dynamics of the consensus error as
\begin{align}
    \label{eq:error_dynamics}
    e(t+1) = (W - \one \one^T/n) e(t).
\end{align}
Now, one way to characterize the rate of convergence of the consensus protocol is by the spectral radius $\rho(W - \one \one^T/n)$. This quantity is the largest absolute eigenvalue of $W$ other than the eigenvalue $1$. It is called the asymptotic convergence factor \cite{XIAO200465}. The weight matrix $W$ which gives the best asymptotic convergence factor can be found by solving the optimization problem
\begin{align}
    \label{eq:W_optimization_rho}
    & \mathrm{minimize} & & \rho(W - \one \one^T/n) \tag{P1} \\
    & \mathrm{subject \ to} & & W \one = \one, \ W^T \one = \one, \nonumber \\
    & & & W_{ij} = 0, (i,j) \notin E, \nonumber
\end{align}
where the constraint 
\begin{align}
    \label{eq:topological_constraint}
    W_{ij} = 0, (i,j) \notin E
\end{align}
is the topological constraint imposed by the graph $G$. It is known that $\rho(W - \one \one^T/n)$ is in general a non-convex function of $W$ and hence \eqref{eq:W_optimization_rho} is a hard problem to solve \cite{doi:10.1137/0609040}. 

We replace the objective function $\rho(W - \one \one^T/n)$ in \eqref{eq:W_optimization_rho} by the convex function $||W - \one \one^T/n||$. By definition of the induced matrix norm, we have
\begin{align}
    \label{eq:per-step_factor}
    ||W - \one \one^T/n|| = \sup_{e(t) \neq 0, t \geq 0} \frac{||e(t+1)||}{||e(t)||},
\end{align}
i.e., $||W - \one \one^T/n||$ captures the worst case one-step increase in the consensus error. Hence, $||W - \one \one^T/n||$ is known as the per-step convergence factor \cite{XIAO200465}. Now, consider the convex problem
\begin{align}
    \label{eq:W_optimization_2-norm}
    & \mathrm{minimize} & & ||W - \one \one^T/n|| \tag{P2} \\
    & \mathrm{subject \ to} & & W \one = \one, \ W^T \one = \one, \nonumber \\
    & & & W_{ij} = 0, (i,j) \notin E. \nonumber
\end{align}
We denote a solution of \eqref{eq:W_optimization_2-norm} by $W^*$. Note that since $W \neq W^T$ in general (refer Remark \ref{rem:asym_W}), we have $\rho(W - \one \one^T/n) \leq ||W - \one \one^T/n||$. We show that if the graph $G$ is connected, then $W^*$ satisfies $\rho(W^* - \one \one^T/n) < 1$.
\begin{lem}
    \label{lem:convex_relaxation}
    The weight matrix $W^*$ obtained by solving the convex problem \eqref{eq:W_optimization_2-norm} satisfies the average consensus condition given in \eqref{eq:W_consensus_condition}.
\end{lem}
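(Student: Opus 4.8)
The plan is to show that any minimizer $W^*$ of \eqref{eq:W_optimization_2-norm} satisfies all three conditions in \eqref{eq:W_consensus_condition}. The first two, $W^*\one = \one$ and $(W^*)^T\one = \one$, are simply the linear constraints of \eqref{eq:W_optimization_2-norm}, so they hold for free. The entire content of the lemma is the strict inequality $\rho(W^* - \one\one^T/n) < 1$, and since $\rho(A) \le \|A\|$ for any matrix, it suffices to prove $\|W^* - \one\one^T/n\| < 1$. Because $W^*$ is a minimizer, this will follow if I can exhibit \emph{any} feasible $W$ with $\|W - \one\one^T/n\| < 1$; the optimal value is then at most that of the exhibited $W$, hence strictly below $1$.

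So the real task is to construct one feasible weight matrix whose per-step factor is strictly less than $1$, using only connectedness of $G$. The natural candidate is a Laplacian-based weight matrix $W = I - \epsilon L$, where $L$ is the (combinatorial) graph Laplacian of $G$ with some fixed positive edge weights and $\epsilon > 0$ small. This $W$ is symmetric, satisfies $W\one = \one$ and $W^T\one = \one$ since $L\one = 0$, and respects the topological constraint \eqref{eq:topological_constraint} because $L_{ij} = 0$ whenever $(i,j)\notin E$. Since $G$ is connected, $L$ is positive semidefinite with a simple zero eigenvalue (eigenvector $\one$) and all other eigenvalues in $(0, \lambda_{\max}]$. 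For the symmetric matrix $W - \one\one^T/n = I - \one\one^T/n - \epsilon L$, the eigenvalue $1$ of $W$ associated with $\one$ is shifted to $0$, and every other eigenvalue $1 - \epsilon\lambda_k$ satisfies $|1 - \epsilon\lambda_k| < 1$ as soon as $0 < \epsilon < 2/\lambda_{\max}$. Hence $\|W - \one\one^T/n\| = \rho(W - \one\one^T/n) = \max_{k\ge 2}|1 - \epsilon\lambda_k| < 1$ for such $\epsilon$, which is exactly the feasible point I need.

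Putting the pieces together: fix edge weights (e.g. all equal to $1$) and pick $\epsilon \in (0, 2/\lambda_{\max}(L))$, giving a feasible $W$ with $\|W - \one\one^T/n\| < 1$. Optimality of $W^*$ then yields $\|W^* - \one\one^T/n\| \le \|W - \one\one^T/n\| < 1$, and therefore $\rho(W^* - \one\one^T/n) \le \|W^* - \one\one^T/n\| < 1$. Combined with the constraint equalities, $W^*$ satisfies \eqref{eq:W_consensus_condition}, and by Proposition \ref{prop:boyd_W_condition} the consensus protocol with $W^*$ reaches average consensus.

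The only place that requires any care — and the main obstacle if one is not careful — is the spectral analysis showing $\rho(W - \one\one^T/n) < 1$ for the constructed $W$: one must argue cleanly that subtracting $\one\one^T/n$ annihilates precisely the $\one$-direction (the eigenvalue that would otherwise be $1$) while leaving the complementary eigenvalues $1 - \epsilon\lambda_k$ untouched, which hinges on $\one$ being an eigenvector of both $W$ and $\one\one^T/n$ and on connectedness guaranteeing $\lambda_k > 0$ for $k \ge 2$. Everything else is immediate from the variational characterization of the minimizer and $\rho(A) \le \|A\|$.
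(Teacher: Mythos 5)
Your proposal is correct. The outer skeleton is the same as the paper's: the equality constraints of \eqref{eq:W_optimization_2-norm} give $W^*\one=\one$ and $(W^*)^T\one=\one$ for free, $\rho(A)\le\|A\|$ reduces everything to $\|W^*-\one\one^T/n\|<1$, and by optimality it suffices to exhibit a single feasible $W$ with $\|W-\one\one^T/n\|<1$. Where you genuinely diverge is in how that feasible point is produced and certified. The paper takes a nonnegative $W$ with positive weights on all edges, invokes Perron--Frobenius machinery (connectedness $\Rightarrow$ $W^{n-1}>0$ $\Rightarrow$ $W$ primitive $\Rightarrow$ $\rho(W-\one\one^T/n)<1$), and then needs the extra identity $\|W-\one\one^T/n\|^2=\rho(WW^T-\one\one^T/n)$ together with the observation that $WW^T$ again satisfies the same hypotheses, because their $W$ need not be symmetric and $\rho<1$ alone does not bound the induced $2$-norm. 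Your choice $W=I-\epsilon L$ with $\epsilon\in(0,2/\lambda_{\max}(L))$ is symmetric, so $\|W-\one\one^T/n\|=\rho(W-\one\one^T/n)$ immediately and the spectrum is computed explicitly: $\one$ is sent to $0$ and the remaining eigenvalues are $1-\epsilon\lambda_k$ with $\lambda_k>0$ by connectedness. This is more elementary (no primitivity, no $WW^T$ detour) and gives a quantitative handle on the achievable convergence factor; the paper's route is slightly more general in that it certifies \emph{any} nonnegative doubly stochastic matrix with positive entries on the edges, not just the Laplacian family, which is also the form reused for the Metropolis weights. Both are complete proofs.
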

The proof of Lemma \ref{lem:convex_relaxation} is given in Appendix \ref{sec:appendix_proof_conv_relax_lemma}. Henceforth, we refer to $||W - \one \one^T/n||$ as simply the \emph{convergence factor} of $W$. Thus, $W^*$ is a weight matrix which achieves consensus and gives an optimal convergence factor for the consensus protocol. Due to the topological constraint $W_{ij} = 0, (i,j) \notin E$, solving \eqref{eq:W_optimization_2-norm} requires the knowledge of the entire graph $G$. Thus, \eqref{eq:W_optimization_2-norm}, in its original form, can only be solved in a `centralized manner'. We later propose to solve the problem in a distributed manner using ADMM. 

\subsection{Locally calculated weights}

Distributed algorithms require that each agent computes the average $x_\textrm{avg}(0)$ using only local information from its neighbours. In the same spirit, we would want each agent $i$ to calculate its optimal weights $\{{W^*}_{ij}, j = 1, \dots, n\}$ locally, without the knowledge of the entire graph. Before looking at the problem of locally computing the optimal weights, we look at a set of weights, which, although not optimal, can be computed locally. 

Consider the set of weights called the local degree weights or the Metropolis weights. We denote the matrix of these weights by $\wm$. For each agent $i \in \{1, \dots, n\}$, let
\begin{align}
    \label{eq:W_metro}
    (\wm)_{ij} = \begin{cases}
            0, & (i,j) \notin E, \\
            \min \big\{\frac{1}{1+|N_i|}, \frac{1}{1+|N_j|}\big\}, & (i,j) \in E, i \neq j, \\
            1 - \sum_{j \in N_i} (\wm)_{ij}, & i = j.
        \end{cases}
\end{align}
Intuitively, with these weights used in the consensus protocol, the information of an agent carries more weight if it has few neighbours. This should speed up the propagation of the information of the not-so-well-connected agents in the network, which will improve the overall rate of convergence of the entire network. These weights are widely used since they are simple to calculate and can be shown to achieve average consensus \cite{1440896}. However, as we shall see later through a numerical example, the convergence factor of $\wm$ can be significantly poor than $W^*$. Our aim is to give a method for computing weights locally, whose convergence factor is optimal. 
\begin{prob}
    \label{prob:main}
    Given an undirected graph $G = (V,E)$, derive a method such that each agent $i \in \{1, \dots, n\}$ can determine its set of weights $\{W_{ij}, j = 1, \dots, n\}$ knowing only its neighbour set $N_i$ such that the weight matrix $W$
    \begin{enumerate}
        \item satisfies the average consensus condition given in \eqref{eq:W_consensus_condition} and
        \item has an optimal convergence factor, i.e., $$||W - \one \one^T/n|| = ||W^* - \one \one^T/n||,$$
        where $W^*$ is a solution of \eqref{eq:W_optimization_2-norm}.
    \end{enumerate}
\end{prob}
We propose a solution to Problem \ref{prob:main} by solving \eqref{eq:W_optimization_2-norm} in a distributed manner. We do this using the well-known Alternating Direction Method of Multipliers (ADMM) \cite{Eckstein2012}. It gives fairly accurate results in relatively fewer number of iterations than other methods \cite{10.1561/2200000016}. Due to this, it has been widely used in solving practical optimization problems, e.g. \cite{6584818,7170787}. Our result is presented in the next section.

\section{Main Result}
\label{sec:result}

In this section, we propose our result which shows how \eqref{eq:W_optimization_2-norm} can be solved in a distributed manner over an undirected, connected graph. First, consider the problem
\begin{align}
    \label{eq:W_i_optimization}
    & \mathrm{minimize} & & \sum_{i=1}^n\frac{||W_i - \one \one^T/n||}{n} \tag{P3} \\
    & \mathrm{subject \ to} & & W_i = W_j, (i,j) \in E, \nonumber \\
    & & & W_i \one = \one, \ W_i^T \one = \one, \ i \in \{1, \dots, n\}, \nonumber \\
    & & & (W_i)_{ij} = 0, (i,j) \notin E, i \in \{1, \dots, n\}. \nonumber
\end{align}
Here, we have replaced the $W$ in \eqref{eq:W_optimization_2-norm} by $n$ matrices $W_1, \dots, W_n$, one for each agent in the network. The matrix $W_i$ can be seen as agent $i$'s estimate of the centralized optimal solution $W^*$. Since $G$ is connected, the constraint $W_i = W_j, (i,j) \in E$ implies all $W_i$'s are equal. This implies \eqref{eq:W_i_optimization} is equivalent to \eqref{eq:W_optimization_2-norm}. Introducing these new matrices will enable solving \eqref{eq:W_i_optimization} in a distributed manner. A key step towards this is to impose the topological constraint $(W_i)_{ij} = 0, (i,j) \notin E$ only on the $i^\text{th}$ row of the matrix $W_i$, for each $i \in \{1, \dots, n\}$. This means each agent can impose this constraint on its estimate $W_i$ knowing only its neighbour set $N_i$. We show that \eqref{eq:W_i_optimization} can be solved in a distributed manner. The steps are given in Algorithm \ref{alg:admm_parallel}. The result is formally stated below, its proof can be found in Appendix \ref{sec:appendix_proof_main}. 

\begin{thm}
\label{thm:main}
    Consider the $n$ sequences of matrices $\{W_i(k)\}_{k \geq 0}, i \in \{1, \dots, n\}$, where each sequence is generated by running Algorithm \ref{alg:admm_parallel} in parallel on each agent $i \in \{1, \dots, n\}$. Then, in the limit as $k$ goes to infinity, all $n$ matrices $W_i(k), i \in \{1, \dots, n\}$
    \begin{enumerate}
        \item are equal, i.e.,
        \begin{align*}
            \lim_{k \rightarrow \infty} ||W_i(k) - W_j(k)|| = 0
        \end{align*}
        for all $i,j \in \{1, \dots, n\}$,
        \item satisfy the constraints of problem \eqref{eq:W_optimization_2-norm} and
        \item \label{thm_point:obj_convergence} give an optimal convergence factor for the consensus protocol \eqref{eq:algo_vector}, i.e., 
        \begin{align*}
            \lim_{k \rightarrow \infty}|| W_i(k) - \one \one^T/n|| = ||W^* - \one \one^T/n||
        \end{align*}
        for all $i \in \{1, \dots, n\}$, where $W^*$ is a solution of the centralized problem \eqref{eq:W_optimization_2-norm}.
    \end{enumerate}
\end{thm}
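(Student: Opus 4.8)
The plan is to recognise \eqref{eq:W_i_optimization} as a convex program in the canonical form to which the standard two-block ADMM convergence theory of \cite{10.1561/2200000016,Eckstein2012} applies, and then to translate the two guarantees that theory delivers --- vanishing primal residual and convergence of the objective value --- into the three assertions of the theorem.

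To set up the ADMM form, collect the agents' estimates into one block $x = (W_1,\dots,W_n)$ and let $f(x) = \tfrac1n\sum_{i=1}^n \|W_i - \one\one^T/n\| + \iota_{\mathcal C}(x)$, where $\iota_{\mathcal C}$ is the $\{0,+\infty\}$-valued indicator of the affine set on which each $W_i$ obeys $W_i\one = \one$, $W_i^T\one = \one$ and $(W_i)_{il} = 0$ for $(i,l)\notin E$; introduce an auxiliary copy $Z_{ij}$ for each edge $(i,j)\in E$, stack them into a second block $z$ with $g(z)$ the indicator of $\{Z_{ij} = Z_{ji}\}$, and couple the blocks by the linear constraints $W_i = Z_{ij}$, $W_j = Z_{ij}$, $(i,j)\in E$. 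This is the splitting behind Algorithm \ref{alg:admm_parallel}: the $z$-update is neighbour-wise averaging and the $x$-update separates into $n$ independent per-agent subproblems over $\mathcal C$, so only data from $N_i$ is used. Both $f$ and $g$ are closed, proper and convex. The one non-trivial hypothesis of the ADMM theorem is existence of a saddle point of the unaugmented Lagrangian; this holds because the feasible set is non-empty ($W_i = \wm$ for all $i$ works), the objective is coercive along the affine feasible directions (on the feasible set $W_i - \one\one^T/n$ ranges over an affine subspace on which the induced norm is equivalent to the Frobenius norm), so a primal optimum $x^\star = (W^\star,\dots,W^\star)$ exists, and, all constraints being affine, strong duality holds with a dual optimum attained.

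Invoking the theorem then gives $W_i(k) - Z_{ij}(k)\to 0$ and $W_j(k) - Z_{ij}(k)\to 0$ for every edge, and $\tfrac1n\sum_{i=1}^n \|W_i(k) - \one\one^T/n\| \to p^\star$, the optimal value of \eqref{eq:W_i_optimization}. The first of these yields $W_i(k) - W_j(k)\to 0$ on every edge, hence (by connectivity) for every pair $i,j$ --- claim 1. Since each $x$-update returns an iterate in $\mathcal C$, every $W_i(k)$ satisfies $W_i(k)\one = \one$, $W_i(k)^T\one = \one$ and $(W_i(k))_{il} = 0$ for $(i,l)\notin E$ for all $k$; together with $W_i(k) - W_j(k)\to 0$ this forces $(W_i(k))_{jl}\to 0$ for every $(j,l)\notin E$, so in the limit every $W_i(k)$ meets all constraints of \eqref{eq:W_optimization_2-norm} --- claim 2. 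Finally, connectivity makes \eqref{eq:W_i_optimization} equivalent to \eqref{eq:W_optimization_2-norm}, so $p^\star = \|W^\star - \one\one^T/n\|$; writing $a_i(k) = \|W_i(k) - \one\one^T/n\|$, we have $\tfrac1n\sum_i a_i(k)\to p^\star$ while, by the reverse triangle inequality and claim 1, $a_i(k) - a_j(k)\to 0$, so each $a_i(k) = \tfrac1n\sum_j a_j(k) + \tfrac1n\sum_j\bigl(a_i(k)-a_j(k)\bigr)\to p^\star$ --- claim 3.

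The main obstacle I anticipate is that generic ADMM theory does not guarantee convergence of the primal iterates themselves, since $\|W - \one\one^T/n\|$ is convex but not strictly so; the argument must therefore rely only on what ADMM actually provides --- residual decay and objective-value convergence. In particular claim 2 has to be obtained from per-iteration feasibility of the local constraints together with asymptotic consensus rather than from a possibly non-existent limit point, and care is needed both to confirm the saddle-point hypothesis (primal attainment via coercivity, dual attainment via affineness of all constraints) and to identify $p^\star$ correctly through the equivalence of \eqref{eq:W_i_optimization} and \eqref{eq:W_optimization_2-norm} and the averaging factor in the objective.
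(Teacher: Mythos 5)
Your proposal follows essentially the same route as the paper: duplicate each agent's estimate across edges via auxiliary variables ($Z_{ij}$ in your notation, $X_{ij},X_{ji}$ in the paper's problem \eqref{eq:optimization_xij's}), recognize the result as a two-block convex program with affine coupling, and invoke the standard ADMM convergence guarantee (vanishing residuals plus objective-value convergence) from \cite{10.1561/2200000016}. Several details you supply are actually \emph{more} careful than the paper's write-up: you check the saddle-point hypothesis (primal attainment by coercivity of the norm on the affine feasible set, dual attainment because all constraints are affine), you derive the off-row topological constraints $(W_i)_{jl}\to 0$, $(j,l)\notin E$, from per-iterate feasibility of row $j$ of $W_j(k)$ combined with $W_i(k)-W_j(k)\to 0$, and you correctly pass from convergence of the \emph{averaged} objective $\tfrac1n\sum_i\|W_i(k)-\one\one^T/n\|$ to convergence of each individual term via the reverse triangle inequality --- a step the paper glosses over.

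The one genuine gap is that you assert, rather than verify, that this splitting ``is the splitting behind Algorithm \ref{alg:admm_parallel}.'' The theorem is about the sequences produced by Algorithm \ref{alg:admm_parallel} as written --- with the single dual matrix $M_i(k)$ and no explicit edge variables --- so the proof must show that its update \eqref{eq:wi_final} and dual recursions coincide with the ADMM iteration for your splitting. This is exactly what the paper's equivalence lemma (Algorithm \ref{alg:admm_parallel} $\equiv$ Algorithm \ref{alg:admm_serial}) establishes, and it is not automatic: the $z$-update only collapses to the neighbour average $\tfrac12(W_i(k)+W_j(k))$, and the two edge duals only combine into $M_i(k)=\sum_{j\in N_i}C_{ij}(k)$, because $C_{ij}(k)+D_{ij}(k)=0$ holds for all $k$ (equation \eqref{eq:cij+dij=0}), which in turn depends on the zero initialization of the duals. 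Without that identity the simplified updates in Algorithm \ref{alg:admm_parallel} would not be ADMM iterates and the convergence theorem could not be applied to them. Adding this algebraic verification would complete your argument.
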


\begin{algorithm}[ht]
\caption{Calculating locally an estimate of the optimal weight matrix $W^*$ at agent $i$ of the network $G$}
\label{alg:admm_parallel}
\begin{algorithmic}
	\STATE \textbf{Initialize:} $\rho > 0, W_i(0) = 0_{n \times n}, a_i(0) = 0_{n \times 1}, b_i(0) = 0_{n \times 1}, M_i(0) = 0_{n \times n}$
	\STATE Send $W_i(0)$ to all neighbours $j \in N_i$.
	\STATE Receive $W_j(0)$ from all neighbours $j \in N_i$.
	\STATE \textbf{Iterate:} \FOR{$k \geq 0$}
	\STATE \textbf{Primal update:}
	\STATE Evaluate $W_i(k+1)$ as per \eqref{eq:wi_final}.
	\STATE \textbf{Exchange values:}
	\STATE Send $W_i(k+1)$ to all neighbours $j \in N_i$.
	\STATE Receive $W_j(k+1)$ from all neighbours $j \in N_i$.
	\STATE \textbf{Dual update:} \begin{align*}
	    a_i(k+1) &= a_i(k) + \rho(W_i(k+1) \one - \one) \\
	    b_i(k+1) &= b_i(k) + \rho(W_i(k+1)^T \one - \one) \\
	    M_i(k+1) &= M_i(k) + \frac{\rho}{2} \sum_{j \in N_i}(W_i(k+1) - W_j(k+1))
	\end{align*}
	\ENDFOR
\end{algorithmic}
\end{algorithm}
It is known that for a convex optimization problem having only equality constraints, the ADMM iterates converge to the optimal objective value and satisfy the constraints of the optimization problem in the limit as the number of iterations goes to infinity \cite[Section 3.2.1]{10.1561/2200000016}. Hence,
to prove Theorem \ref{thm:main}, all we need to show is that Algorithm \ref{alg:admm_parallel} is an ADMM implementation of problem \eqref{eq:W_i_optimization}. This is done in Appendix \ref{sec:appendix_proof_main}. We now make some remarks on the algorithm.

\begin{rem}{(Convergence of the primal variable sequences)}
    ADMM does not guarantee convergence of the primal variables \cite[Section 3.2.1]{10.1561/2200000016}. In the context of Theorem \ref{thm:main}, this implies that, in general, $\lim_{k \rightarrow \infty} W_i(k)$ may not exist. However, this does not affect the practical application of our result since we are only interested in a weight matrix $W$ which satisfies the constraints of \eqref{eq:W_optimization_2-norm} and gives an optimal convergence factor.
\end{rem}

\begin{rem}{(Stopping criterion for Algorithm \ref{alg:admm_parallel})}
    \label{rem:stop_crit}
    A stopping criterion for Algorithm \ref{alg:admm_parallel} is derived in Appendix \ref{sec:appendix_stop_crit}. The criterion can be verified locally as follows. Each agent fixes an arbitrary, small number $\epsilon > 0$. At each iteration $k$, the agent computes its `residual' $R_i(k)$, which captures how far is the agent's estimate $W_i(k)$ from the optimal objective function value (see \eqref{eq:max_residual} further ahead for the definition of $R_i(k)$). The agent is trying to minimize this residual. For a given iteration $k$, we say the stopping criterion of agent $i$ is satisfied if $R_i(k) \leq \epsilon$. Note that for the smallest $k$ for which the stopping criterion of agent $i$ is satisfied, the agent can stop updating its variables and send the fixed value $W_i = W_i(k)$ to its neighbours as long as $R_i(k) \leq \epsilon$ holds. We say that the stopping criterion for Algorithm \ref{alg:admm_parallel} is satisfied if $R_i(k) \leq \epsilon$ for all $i \in \{1, \dots, n\}$. 
\end{rem}

\begin{rem}{(Running Algorithm \ref{alg:admm_parallel} in a distributed, parallel manner)}
    \label{rem:alg_distributed_manner}
    We can verify that Algorithm \ref{alg:admm_parallel} can indeed be executed in a distributed manner. Each agent $i$ maintains a primal variable $W_i(k)$ and dual variables $a_i(k)$, $b_i(k)$ and $M_i(k)$ at each iteration $k$. The penalty parameter $\rho > 0$ can be arbitrarily fixed by the agent, although a standard choice is $\rho = 1$. At each iteration, each agent updates its primal variable, exchanges the updated primal variables with its neighbours and updates its dual variables. Thus, Algorithm \ref{alg:admm_parallel} can run in a distributed manner, where each agent has to exchange values only with its immediate neighbours.
    
    \noindent From the primal update equation \eqref{eq:wi_final} of the algorithm, it is clear that to calculate $W_i(k+1)$, agent $i$ requires the $W_j(k)$ values only from the previous iteration of its neighbours. Hence, it does not have to wait for other agents to finish their updates before performing its update. Thus, Algorithm \ref{alg:admm_parallel} can run in parallel across all agents.
\end{rem}

\begin{rem}{(Information required in running Algorithm \ref{alg:admm_parallel})}
    \label{rem:admm_vs_metro_info_&_comm}
    To run Algorithm \ref{alg:admm_parallel} on all agents, the agents have to initialize their set of primal and dual variables with the same dimensions across all agents. This requires that each agent must know the total number of agents $n$. Since this may not be possible, the agents can have a common upper-estimate of $n$ and initialize their variables as per this estimate.
    
    \noindent Each agent $i$ also has to know \emph{which agents} are its neighbours to incorporate the constraint $(W_i)_{ij} = 0, (i,j) \notin E$ in its primal update step \eqref{eq:wi_final}. For this, all the agents should be indexed a priori and each agent must know its index. Then, in the first iteration of the algorithm, each agent can communicate with its neighbours to know their indices.
    
\end{rem}

Next, we present some numerical examples to illustrate the convergence of Algorithm \ref{alg:admm_parallel} and to compare our method with the centrally computed $W^*$ and the Metropolis weight matrix $\wm$.

\section{Examples}
\label{sec:examples}
\subsection{Fixed network}
\label{subsec:ex_static}

\begin{figure}[ht]
    \centering
    \includegraphics[scale=0.30]{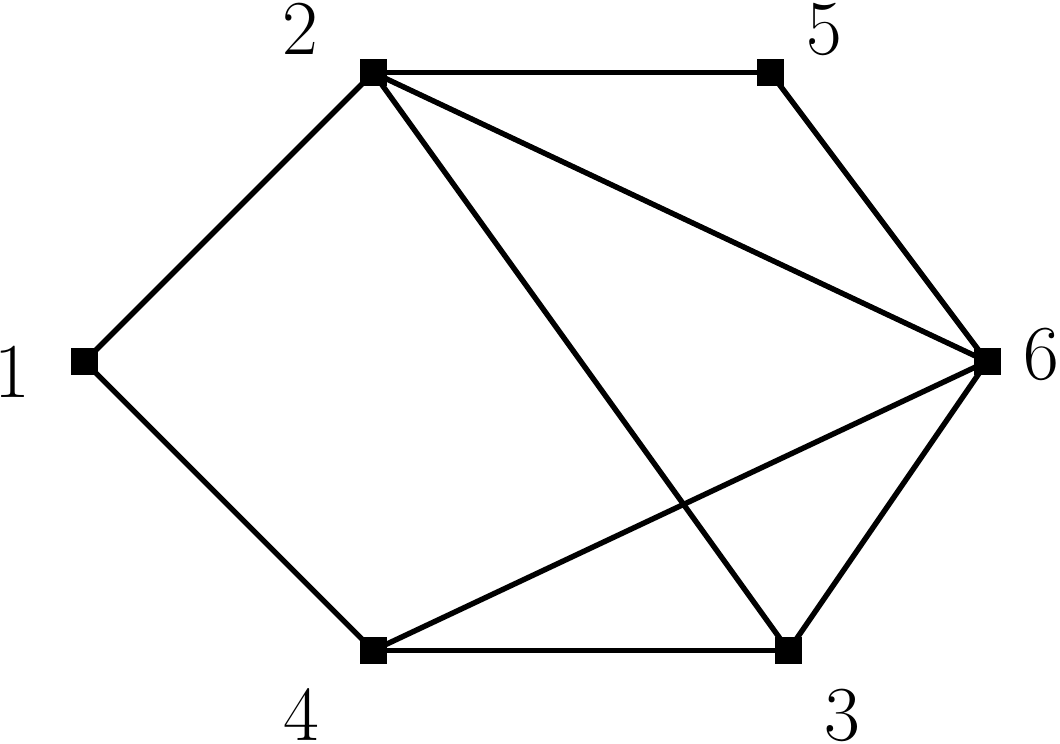}
    \caption{A connected network of $n = 6$ agents}
    \label{fig:network}
\end{figure}

\begin{table*}[ht]
\centering
\begin{tabular}{|L{2cm}|L{2cm}|L{2cm}|L{2cm}|} \hline
    & Centrally computed & \multicolumn{2}{c|}{Locally computed} \\ \cline{2-4}
    & $W^*$ & $\wm$ & $\wadmm$ \\
    & (centralized) & (Metropolis) & (Algorithm \ref{alg:admm_parallel}) \\ \hline
    $||W - \one \one^T/n||$ & $0.4492$ & $0.6724$ & $0.4519$ \\ \hline
\end{tabular}
\caption{Convergence factors of different weight matrices for the network shown in Fig. \ref{fig:network}}
\label{tab:comparison}
\end{table*}

\begin{figure}[ht]
    \centering
    \includegraphics[width=0.5\textwidth]{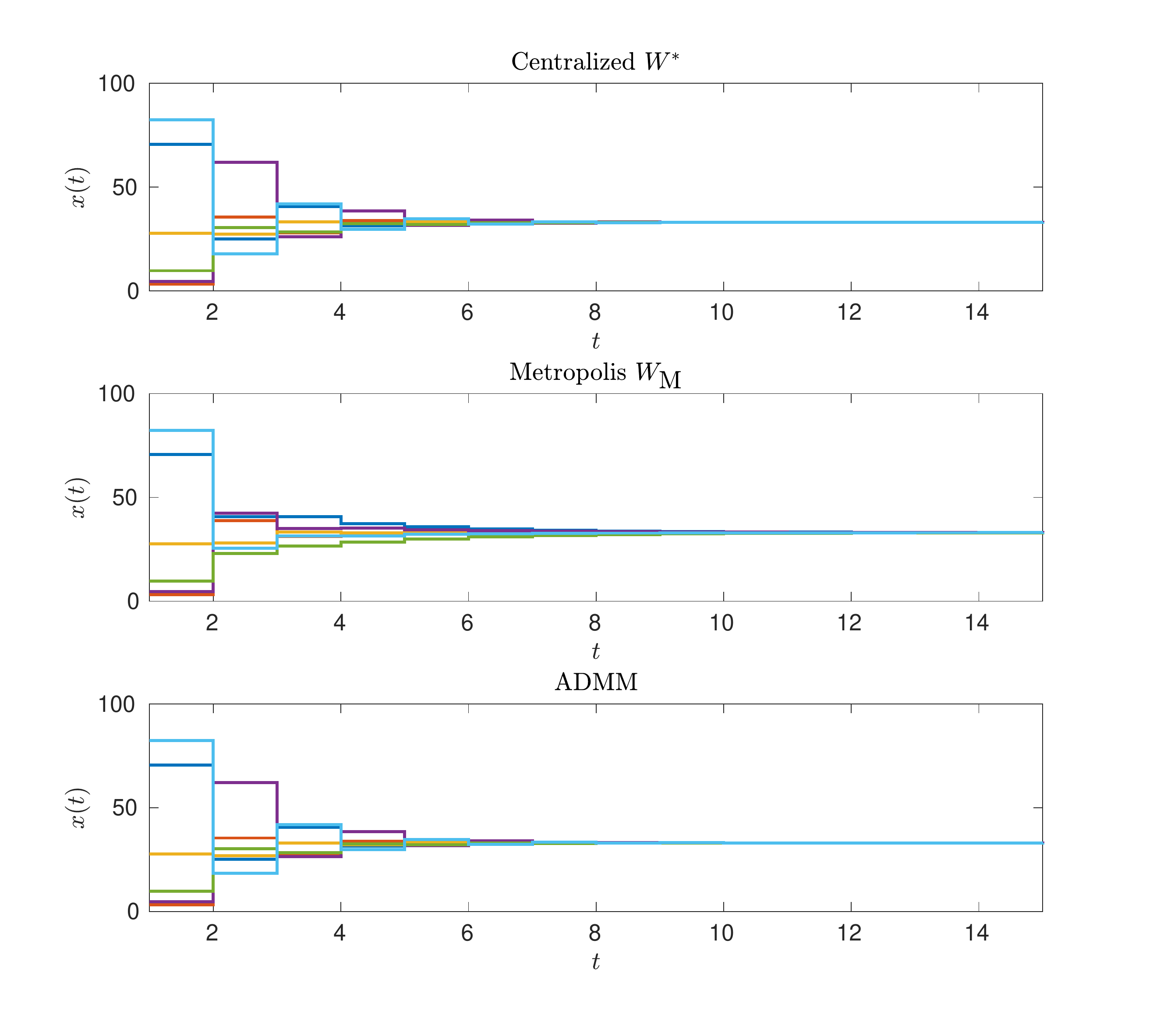}
    \caption{Convergence of the agent values to the average of the initial values with the three weight matrices $W^*$, $\wm$ and $\wadmm$ used in the consensus protocol for the network shown in Fig. \ref{fig:network}}
    \label{fig:states}
\end{figure}

\begin{figure}[ht]
    \centering
    \includegraphics[width=0.5\textwidth]{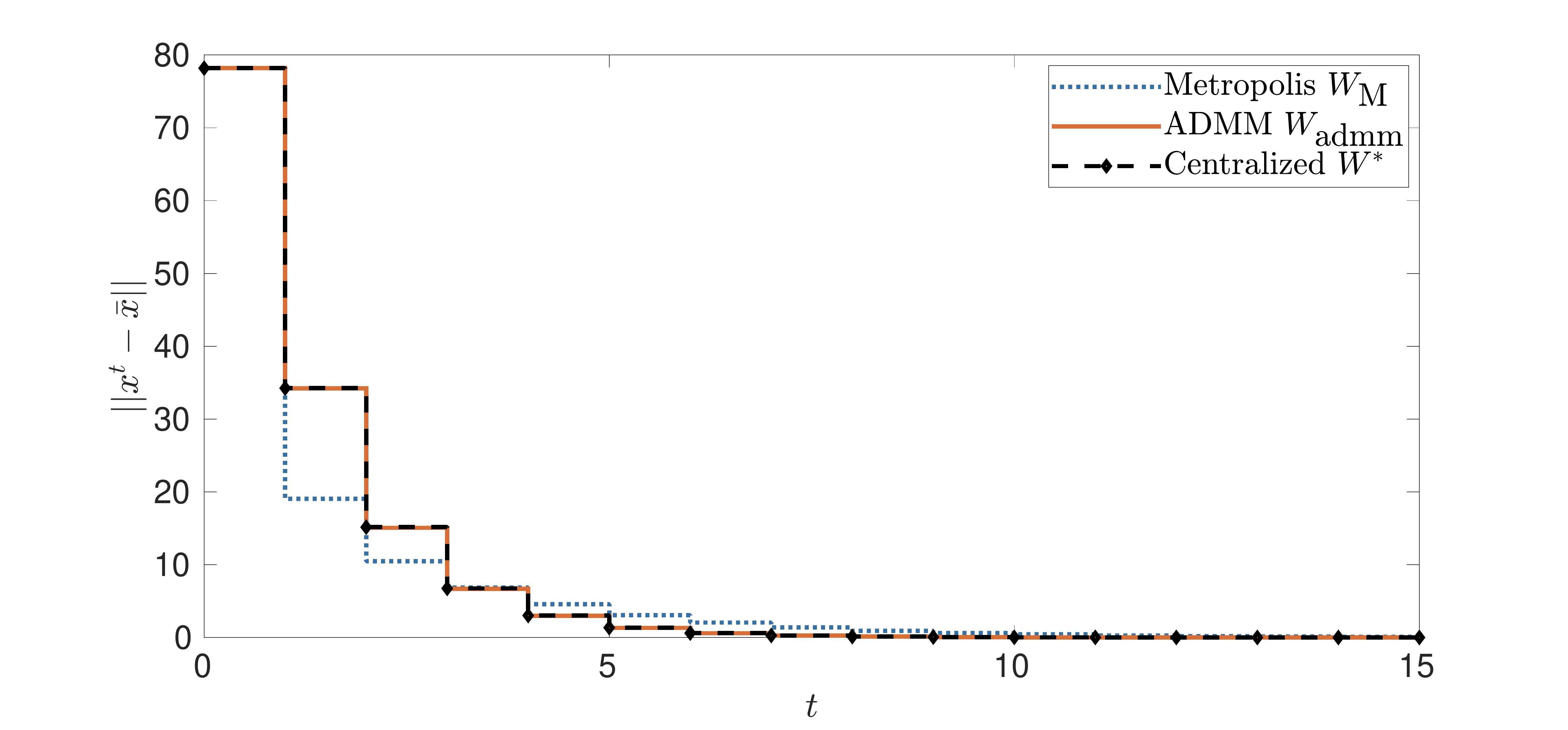}
    \caption{Convergence to zero of the consensus error with the three weight matrices $W^*$, $\wm$ and $\wadmm$ used in the consensus protocol for the network shown in Fig. \ref{fig:network}. It was observed that the error went below $0.1$ at $t = 9$ for the centralized and ADMM methods, while the Metropolis weights required $t = 14$.}
    \label{fig:error}
\end{figure}

\begin{figure}[ht]
    \psfrag{11^T}{\mathbb{11}^T}
    \centering
    \includegraphics[width=0.5\textwidth]{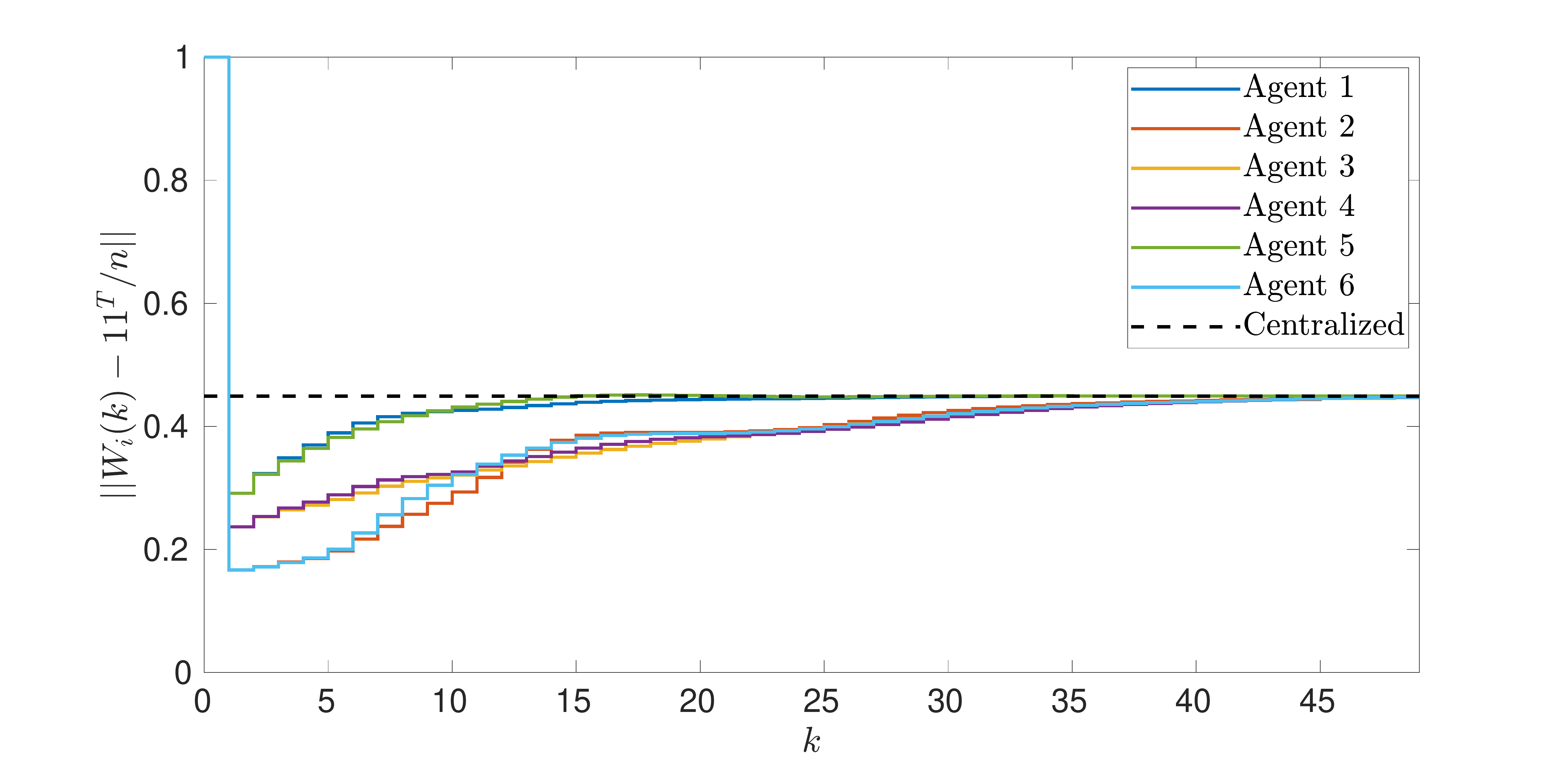}
    \caption{Convergence of the objective functions maintained by the agents to the centralized optimal value $||W^* - \one \one^T/n||$ as Algorithm \ref{alg:admm_parallel} progresses for the network shown in Fig. \ref{fig:network}}
    \label{fig:obj_fns}
\end{figure}

\begin{figure}[ht]
    \centering
    \includegraphics[width=0.5\textwidth]{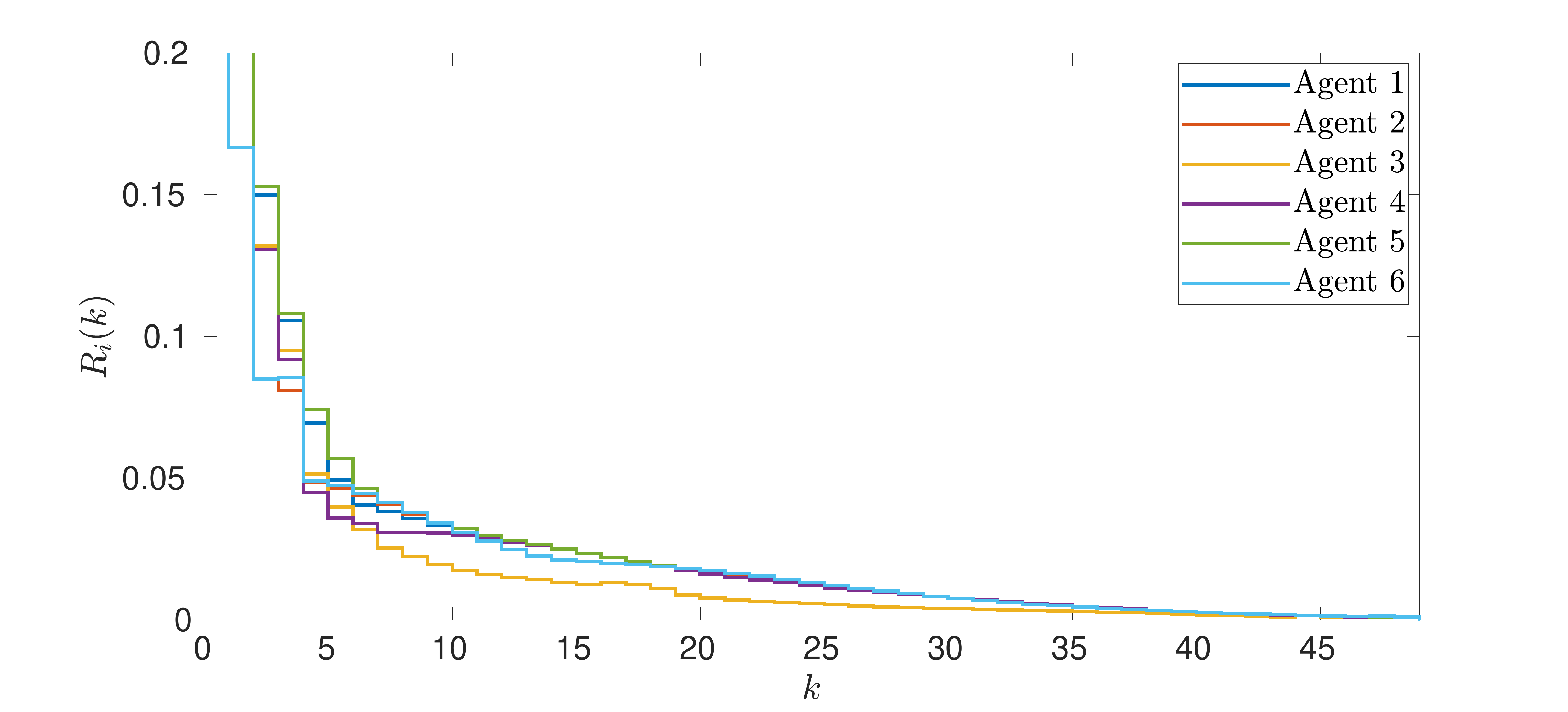}
    \caption{Convergence to zero of the maximum residuals maintained by the agents as Algorithm \ref{alg:admm_parallel} progresses for the network shown in Fig. \ref{fig:network}. The residuals become smaller than $\epsilon = 0.001$ (i.e., the stopping criterion is satisfied) at $k = 48$.}
    \label{fig:residuals}
\end{figure}

We consider an undirected, connected network of $n = 6$ agents as shown in Fig. \ref{fig:network}. For this network, we solve\footnote{We solve all optimization problems using the MOSEK solver \cite{mosek} in YALMIP toolbox \cite{Lofberg2004} on MATLAB \cite{matlab}.} the centralized problem \eqref{eq:W_optimization_2-norm} to get the optimal weight matrix $W^*$. Then, we compute the Metropolis weight matrix $\wm$ using \eqref{eq:W_metro}. 

Now, to calculate the weight matrix using our method, we run Algorithm \ref{alg:admm_parallel} in parallel on all $6$ agents using $\rho = 1/16$ as the penalty parameter\footnote{This value of $\rho$ was observed to give faster convergence of Algorithm \ref{alg:admm_parallel} for our examples.}, $\epsilon = 0.001$ for the stopping criteria. It was observed that the stopping criterion of Algorithm \ref{alg:admm_parallel} was satisfied at $k = 48$ (refer Remark \ref{rem:stop_crit} for details on how to check the stopping criterion). We want to find the convergence factor of the consensus protocol \eqref{eq:algo_vector} obtained using our method. For this, we define
\begin{align*}
    \wadmm = \begin{bmatrix} w_1^T \\ \vdots \\ w_n^T \end{bmatrix},
\end{align*}
where $w_i \in \R^{n \times 1}$ is the $i^\text{th}$ row of $W_i(48)$. The convergence factors of $\wadmm$ and other weight matrices are shown in Table \ref{tab:comparison}. It can be seen that the convergence factor of $\wadmm$ is quite close to that of the centrally computed $W^*$ and is much better than that of $\wm$. Now, we run the consensus protocol using these different weight matrices. Consider the randomly generated initial agent values given by $x(0) = \begin{bmatrix} 70.6046 &3.1833 &27.6923 &4.6171 &9.7132 &82.3458 \end{bmatrix}^T$. Their average is $x_\textrm{avg}(0) = 33.0260$. Fig. \ref{fig:states} shows that the agents reach consensus to this average value using each of the three weight matrices. Fig. \ref{fig:error} shows that for the centralized and ADMM methods, the consensus error $e(t) = x(t) - \bar{x}$ decays at almost the same rate, while the decay is slower for the Metropolis weights.

Next, we illustrate the convergence of Algorithm  \ref{alg:admm_parallel} used in calculating $\wadmm$. Fig. \ref{fig:obj_fns} shows the convergence of the objective functions maintained by each agent to the centralized optimal value $||W^* - \one \one^T/n|| = 0.4492$. This illustrates Statement \ref{thm_point:obj_convergence} of Theorem \ref{thm:main}. Fig. \ref{fig:residuals} shows the convergence to zero of the maximum residual at each agent as defined in \eqref{eq:max_residual}. Intuitively, the maximum residual $R_i(k)$ of agent $i$ captures how far is the objective function value $||W_i(k) - \one \one^T/n||$ maintained by the agent from the optimal objective value $||W^*-\one \one^T/n||$ (refer Appendix \ref{sec:appendix_stop_crit} for details on the maximum residual). Comparing Fig. \ref{fig:residuals} with Fig. \ref{fig:obj_fns}, we can observe that the objective functions are close to the optimal value when the residuals are close to zero. This asserts that our choice of the stopping criterion as derived in Appendix \ref{sec:appendix_stop_crit} is a good one.

\subsection{New agents enter a network: ADMM live --- a variation of Algorithm \ref{alg:admm_parallel}}
\label{subsec:admm_live_example}

\begin{figure}[ht]
    \centering
    \includegraphics[scale=0.30]{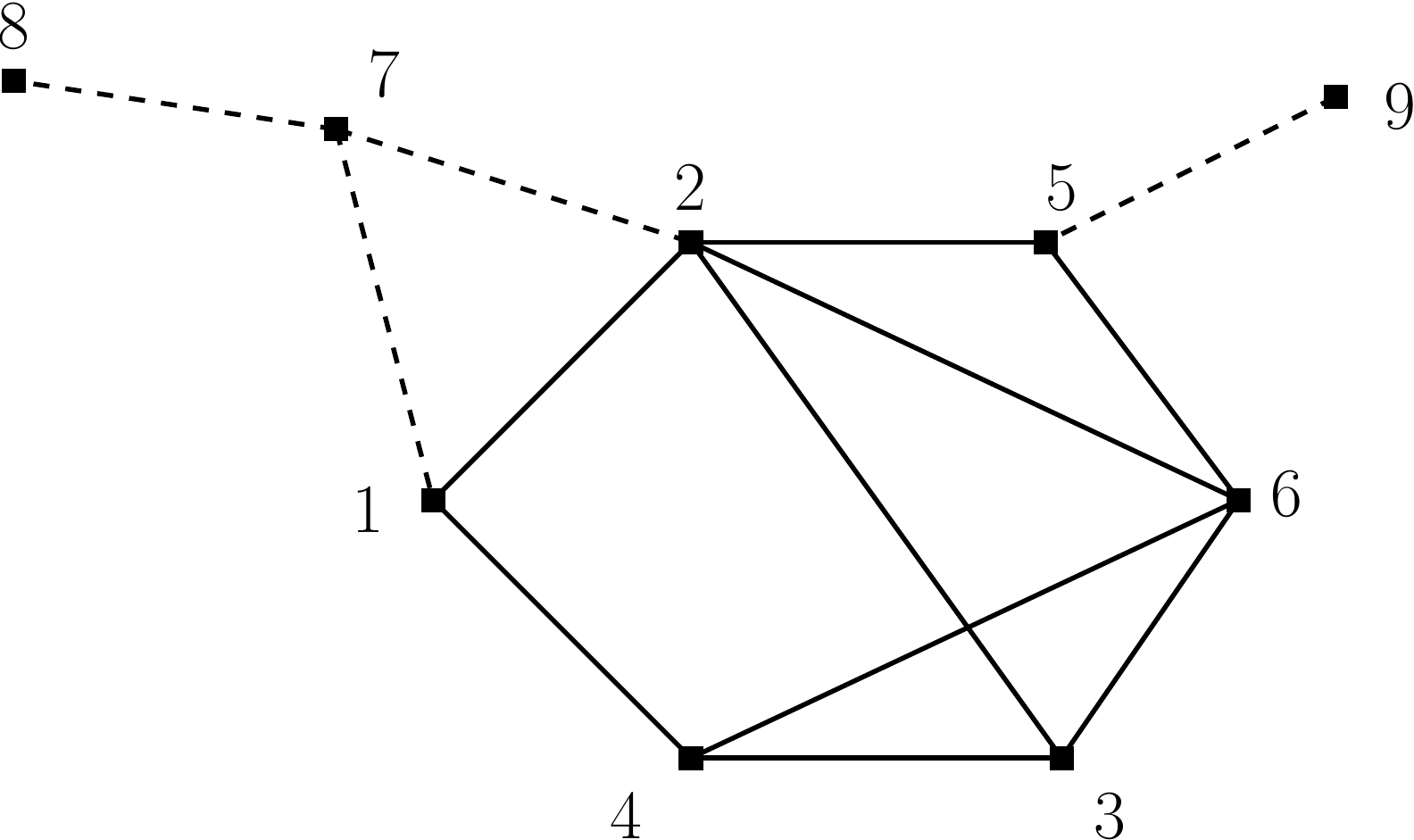}
    \caption{At $t = 0$, agents $1$ to $6$ are connected by a network. At $t = 30$, agents $7$, $8$ and $9$ enter the network.}
    \label{fig:network_new_agents}
\end{figure}

\begin{figure}[ht]
    \centering
    \includegraphics[width=0.5\textwidth]{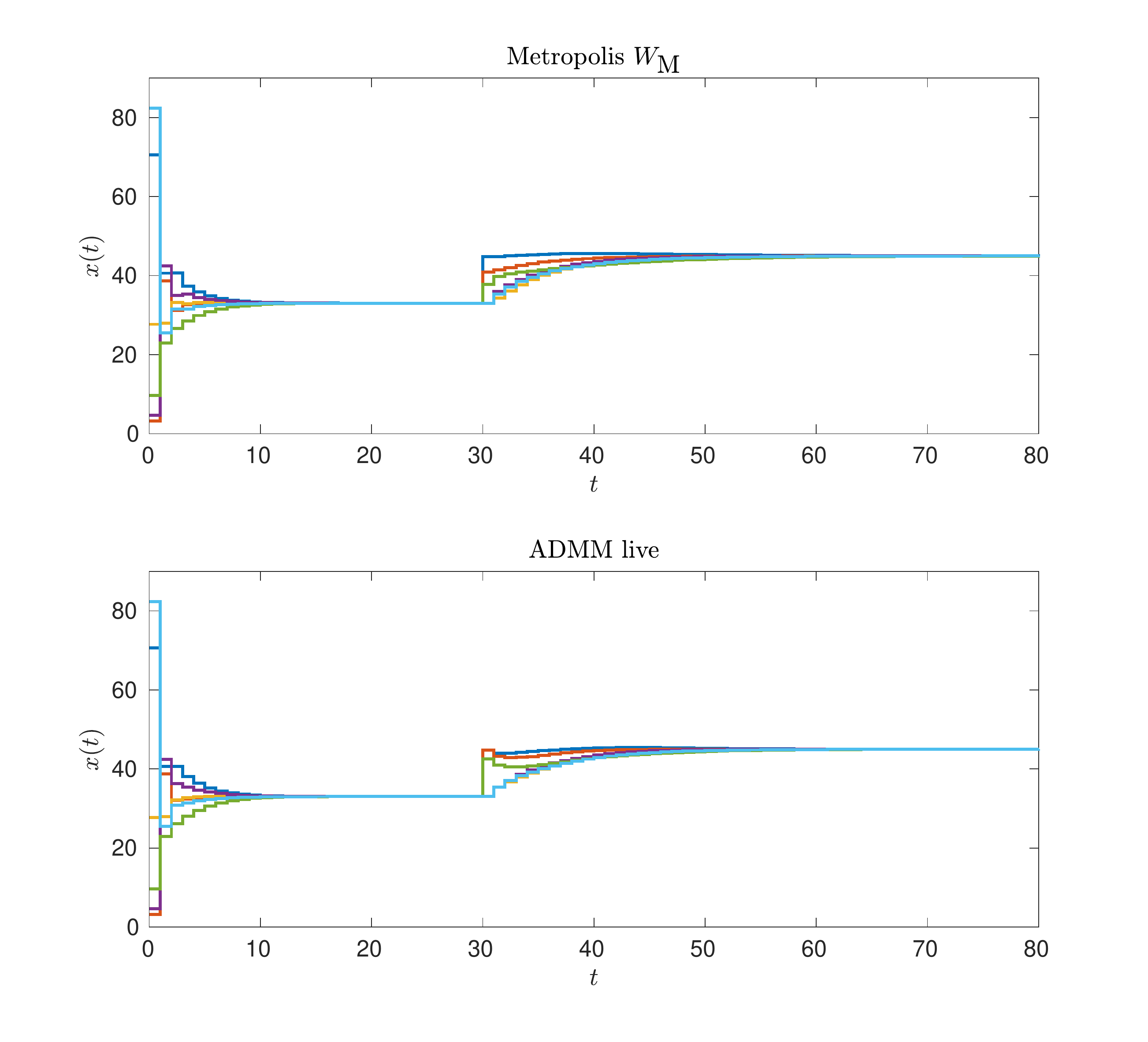}
    \caption{Evolution of the agent values with $\wm$ and ADMM live for the network shown in Fig. \ref{fig:network_new_agents}. In both methods, the agents reach consensus to the new average value after new agents enter the network at $t = 30$.}
    \label{fig:states_new_agents}
\end{figure}

\begin{figure}[ht]
    \centering
    \includegraphics[width=0.5\textwidth]{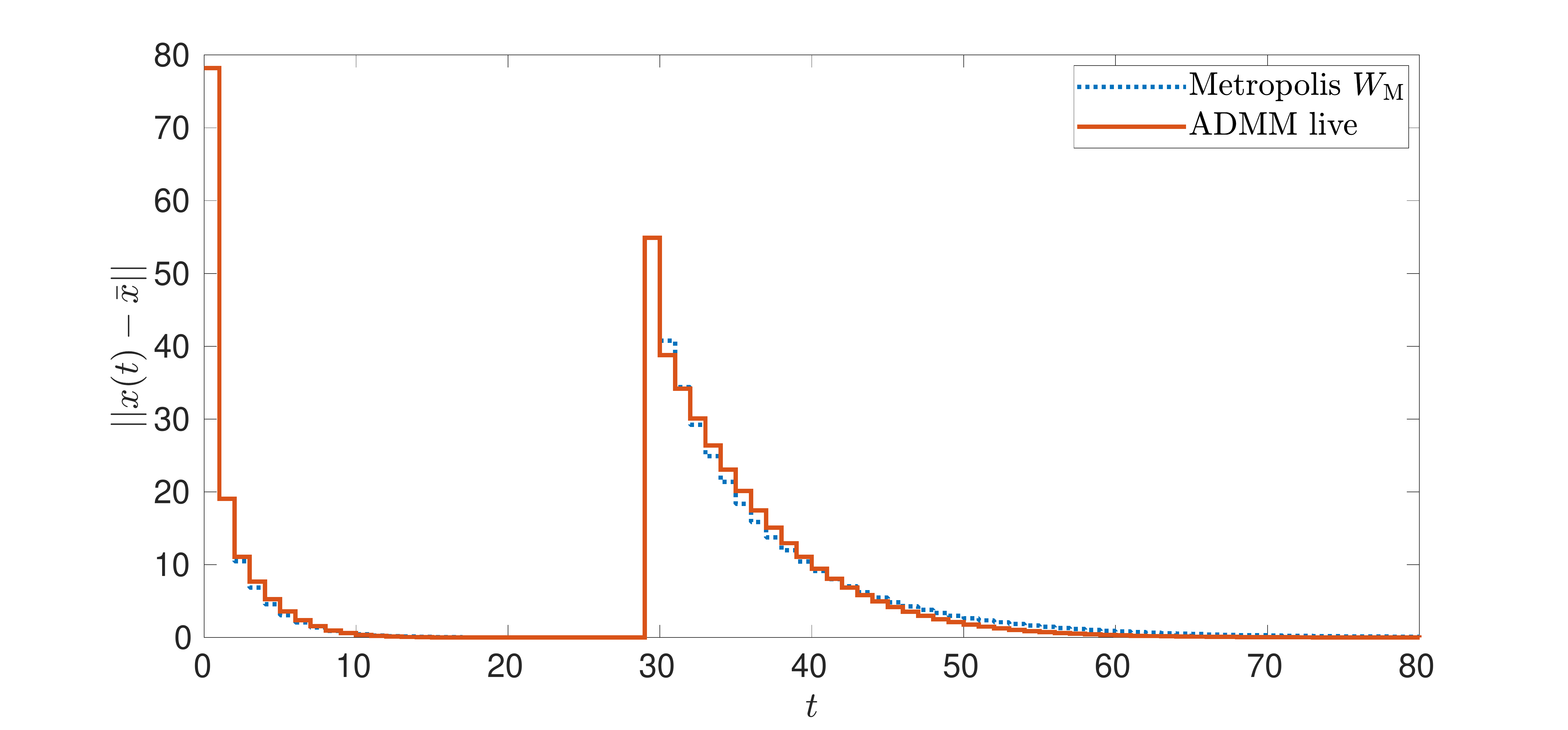}
    \caption{Evolution of the consensus error with $\wm$ and ADMM live for the network shown in Fig. \ref{fig:network_new_agents}. It was observed that, after the new agents entered the network at $t = 30$, the error went below $0.1$ at $t = 67$ for ADMM live, while the Metropolis weights required $t = 80$.}
    \label{fig:error_new_agents}
\end{figure}

\begin{figure}[ht]
    \centering
    \includegraphics[width=0.5\textwidth]{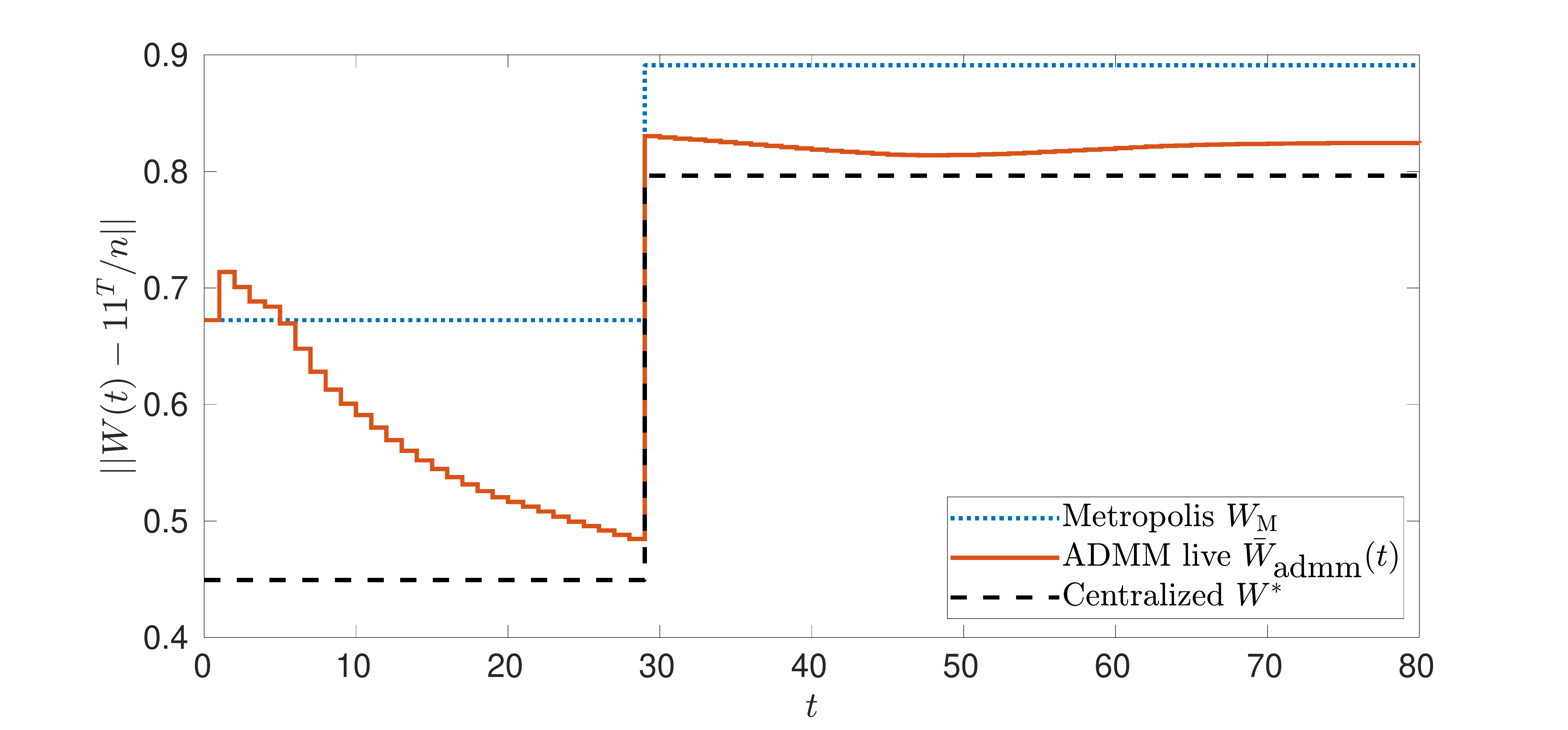}
    \caption{Convergence factor of the consensus protocol with $\wm$ and ADMM live for the network shown in Fig. \ref{fig:network_new_agents}}
    \label{fig:per_step_new_agents}
\end{figure}

In the previous example, we compared the convergence factor of $\wadmm$ with the centralized $W^*$ and the Metropolis $\wm$ and observed that the performance of $\wadmm$ is better than $\wm$. However, Algorithm \ref{alg:admm_parallel} requires some iterations of communication and computation in finding $\wadmm$ in a distributed manner. While certain applications do have some initial buffer time to compute the weights, few applications might not have this. For such cases we propose the following variation where at each iteration of Algorithm \ref{alg:admm_parallel}, we employ the consensus protocol \eqref{eq:algo_vector} using the \emph{latest available} weights.
More specifically, for each iteration $k \geq 0$, define 
\begin{align*}
    \wtadmm(k) = \begin{bmatrix} w_1(k)^T \\ \vdots \\ w_n(k)^T \end{bmatrix},
\end{align*}
where $w_i(k) \in \R^{n \times 1}$ is the $i^\text{th}$ row of $W_i(k)$. Note that for small values of $k$, the matrix $\wtadmm(k)$ may not satisfy the constraint 
\begin{align}
    \label{eq:wtadmm_consitions}
    \wtadmm(k) \one = \one, \ \wtadmm(k)^T \one = \one.
\end{align}
However, it is necessary that a weight matrix satisfies this constraint at each instant of time for the agents to achieve average consensus using the consensus protocol (refer Proposition \ref{prop:boyd_W_condition}). Hence, we modify $\wtadmm(k)$ as follows. Define a new weight matrix $\wbadmm(k)$ as
\begin{align*}
    \wbadmm(k)_{ij} = \begin{cases}
        \min\{\wtadmm(k)_{ij},\wtadmm(k)_{ji}\}, & i \neq j, \\
        1 - \sum_{l \neq i, l = 1}^n \wbadmm(k)_{il}, & i = j,
    \end{cases}
\end{align*}
i.e., at each iteration $k$, we convert $\wtadmm(k)$ into a symmetric matrix by replacing its $(i,j)^\text{th}$ and $(j,i)^\text{th}$ elements with the smaller one among the two. In context of Algorithm \ref{alg:admm_parallel}, this can be done by exchanging weights among the neighbours. Then, each agent $i$ adjusts its self-weight such that the row (and hence the column) sums of $\wbadmm(k)$ are $1$. This technique is same as the one used in calculating the Metropolis weight matrix $\wm$ in \eqref{eq:W_metro}.  Now, $\wbadmm(k)$ satisfies the constraint \eqref{eq:wtadmm_consitions}. Further, to reduce the overall time required to reach consensus, we initialize $W_i(k)$ using the Metropolis weights, i.e., for each $i \in \{1, \dots, n\}$, set the $i^\text{th}$ row of $W_i(0)$ to be the same as that of $\wm$, with the rest of the rows being zero. This way, the agent values in the consensus protocol will not remain `idle' for $t = 0$ due to all weights being initialized at zero. Now, we update the agent values using $\wbadmm(k)$ in the consensus protocol, i.e., for all $t \geq 0$,
\begin{align}
    \label{eq:admm_protocol}
    x(t+1) = \wbadmm(t) x(t).
\end{align}
We call the above method of implementing Algorithm \ref{alg:admm_parallel} and consensus protocol \eqref{eq:admm_protocol} simultaneously as \emph{ADMM live}. Now, we compare the performance of ADMM live with the Metropolis weights using the following example. 

At time $t = 0$, suppose there are $6$ agents connected by the same network as shown in Fig. \ref{fig:network}. Now, at $t = 30$, three new agents enter the network as shown in Fig. \ref{fig:network_new_agents}. Suppose these new agents have `initial values' $x_7(30) = 80.0559, x_8(30) = 74.5847, x_9(30) = 52.1186$. Now, the new average value of the $9$ agents is $x_\textrm{avg}(0) = 44.9906$. Then, all the agents must compute a set of weights which involve these new agents and reach consensus to the new average value. The evolution of the agent values using the Metropolis $\wm$ and ADMM live is shown in Fig. \ref{fig:states_new_agents}. Note that the centralized method is not capable of dynamically computing new weights after a change in the network topology. Hence, we compare ADMM live with the Metropolis weights, both of which can handle a change in network topology since they compute weights locally. In Fig. \ref{fig:states_new_agents}, the agents try to reach consensus initially, until new agents enter the network at $t = 30$. At this point, all agents compute a new set of weights and reach the new average value asymptotically. Fig. \ref{fig:error_new_agents} shows the consensus error for the two methods. It is observed that since the convergence factor of the weights obtained by ADMM live converges to the optimal value, the error goes to zero faster than with the Metropolis $\wm$. In Fig. \ref{fig:per_step_new_agents}, we compare this changing convergence factor with the that of the fixed Metropolis weights. We can see that the convergence factor given by ADMM live is smaller than the Metropolis weights for most of the time. In fact, it can be observed from the figure that just after a few steps (at $k = 6$), the convergence factor of ADMM live starts outperforming that of the Metropolis weights. Also, this value approaches the optimal value given by the centralized optimal solution $W^*$.
Hence we observe that the performance of ADMM live is better than that of Metropolis.

\subsection{Average computation time of Algorithm \ref{alg:admm_parallel}}

\begin{figure}[ht]
    \centering
    \includegraphics[width=0.5\textwidth]{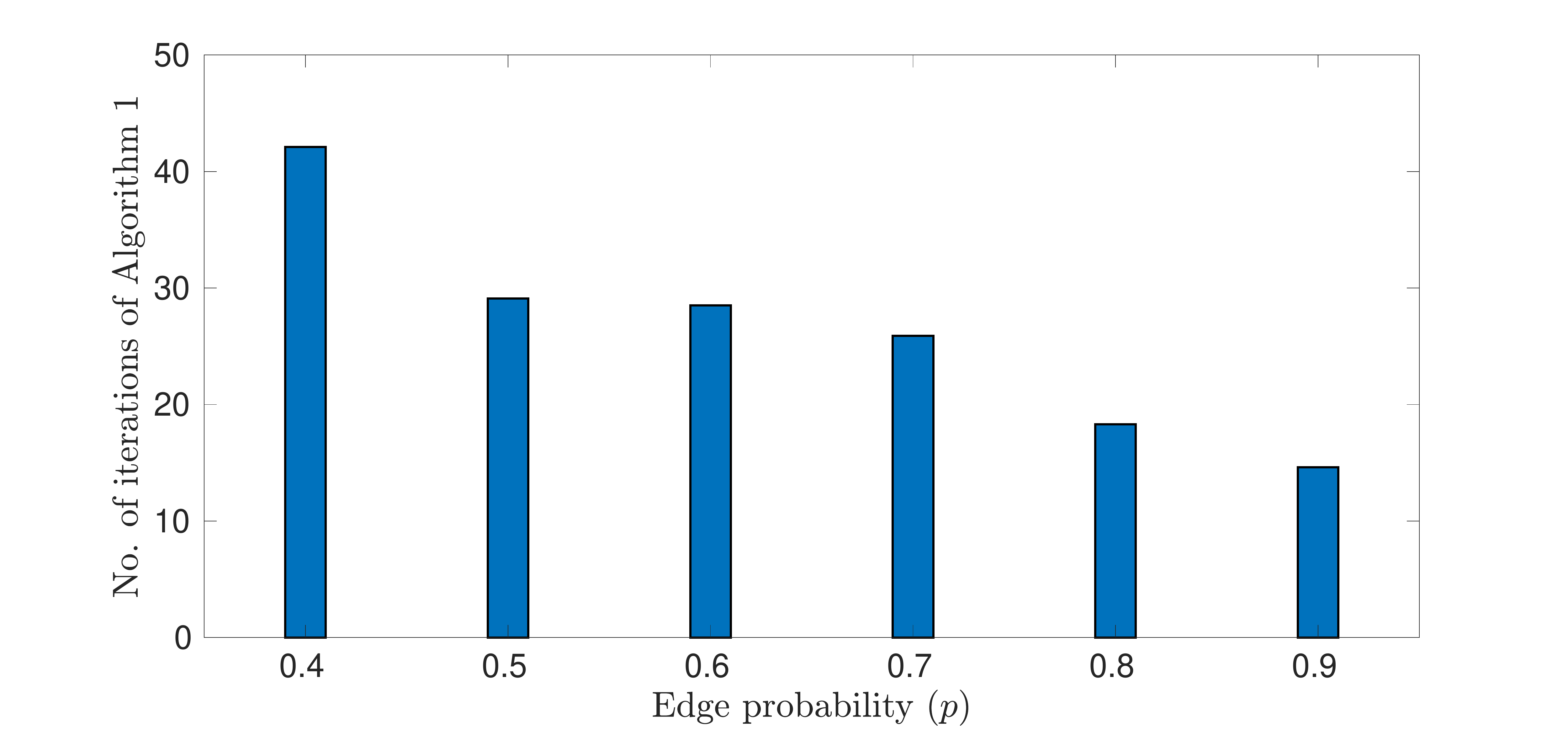}
    \caption{Number of iterations required for Algorithm \ref{alg:admm_parallel}, averaged over $10$ Erdos-Renyi random graphs. Each graph has $n = 10$ nodes and an edge probability of $p$.}
    \label{fig:avg_no_of_itr}
\end{figure}

We analyze what is the average number of iterations required for Algorithm \ref{alg:admm_parallel} to satisfy its stopping criterion. For this, we construct a set of Erdos-Renyi (ER) random graphs as follows. Consider a network of $n = 10$ agents. For any pair of nodes, we place an edge between them with a probability $p$. For each fixed $p \in \{0.4,0.5,0.6,0.7,0.8,0.9\}$, we generate $10$ ER random graphs. Then, we run Algorithm \ref{alg:admm_parallel} on each of them with $\epsilon = 0.01$ for the stopping criterion. For each $p$, we take the average of the number of iterations required for the algorithm to stop, averaged over all $10$ random graphs. The result is shown in Fig. \ref{fig:avg_no_of_itr}. It can be seen that for denser graphs (larger $p$), the algorithm converges faster. 

Next, we give some concluding remarks.

\section{Conclusion}
\label{sec:conclusion}

We proposed a method for agents in a network to locally compute a set of weights for the discrete-time distributed average consensus protocol. The weights are computed through an iterative algorithm derived from ADMM. The algorithm requires each agent to maintain and update a set of variables by solving an optimization problem and by exchanging these variables with its neighbours. We showed that the convergence factor of the consensus protocol with these locally computed weights converges to the optimal value given by the centralized solution. Through a numerical example, we showed that these weights perform better than the locally computed Metropolis weights in terms of the convergence factor. 

\begin{table*}[ht]
\noindent\rule{\textwidth}{1pt}
\normalsize{
Augmented Lagrangian:
\begin{align}
    \label{eq:aug_lag}
    L_\rho & = \frac{1}{n} \sum_{i=1}^n ||W_i - \one \one^T / n|| + \sum_{i=1}^n a_i^T(W_i \one - \one) + \sum_{i=1}^n b_i^T(W_i^T \one - \one) + \sum_{(i,j) \in E} \big[\tr((W_i - X_{ij})^T C_{ij}) \nonumber \\
    & \qquad + \tr((W_j - X_{ji})^T D_{ij})\big] + \frac{\rho}{2} \bigg[ \sum_{i=1}^n \big(||W_i \one - \one||^2 + ||W_i^T \one - \one||^2\big) + \sum_{(i,j) \in E} \big(||W_i - X_{ij}||_F^2 + ||W_j - X_{ji}||_F^2\big) \bigg]
\end{align}}
\end{table*}

\begin{table*}[ht]
\noindent\rule{\textwidth}{1pt}
\normalsize{
Update rule in Algorithm \ref{alg:admm_serial}:
\begin{align}
    \label{eq:wi_initial}
    W_i(k+1) & = \underset{W_i:(W_i)_{ij} = 0, (i,j) \notin E}{\argmin} \bigg\{\frac{1}{n}||W_i - \one \one^T / n|| + a_i(k)^T(W_i \one - \one) + b_i(k)^T(W_i^T \one - \one) + \sum_{j \in N_i} \tr(W_i^T C_{ij}(k)) \nonumber \\
    & \qquad + \frac{\rho}{2} \bigg[ ||W_i \one - \one||^2 + ||W_i^T \one - \one||^2 + \sum_{j \in N_i}||W_i - X_{ij}(k)||_F^2 \bigg] \bigg\}
\end{align}}
\end{table*}

\begin{table*}[ht]
\noindent\rule{\textwidth}{1pt}
\normalsize{
Update rule in Algorithm \ref{alg:admm_parallel}:
\begin{align}
    \label{eq:wi_final}
    W_i(k+1) & = \underset{W_i:(W_i)_{ij} = 0, (i,j) \notin E}{\argmin} \bigg\{\frac{1}{n}||W_i - \one \one^T / n|| + a_i(k)^T(W_i \one - \one) + b_i(k)^T(W_i^T \one - \one) + \tr(W_i^T M_i(k)) \nonumber \\
    & \qquad + \frac{\rho}{2} \bigg[ ||W_i \one - \one||^2 + ||W_i^T \one - \one||^2 + \sum_{j \in N_i}\bigg|\bigg|W_i - \frac{W_i(k)+W_j(k)}{2}\bigg|\bigg|_F^2 \bigg] \bigg\},
\end{align}
where $M_i(k) = \sum_{j \in N_i} C_{ij}(k)$.} 

\noindent\rule{\textwidth}{1pt}
\end{table*}

\appendices

\section{Proof of Theorem \ref{thm:main}}
\label{sec:appendix_proof_main}

We have to show that Algorithm \ref{alg:admm_parallel} is an ADMM implementation of problem \eqref{eq:W_i_optimization}.
\begin{rem}
    \label{rem:admm_serial}
    Note that if we try to solve problem \eqref{eq:W_i_optimization} directly using ADMM, although the ADMM algorithm can run in a distributed manner, the $W_i(k+1)$ updates will have to be performed `serially' across different agents, i.e., agent $i$ would have to wait for agents $1, \dots, i-1$ to update their values before it can perform its update. This is because the $W_i$'s corresponding to different agents appear in the same constraint $W_i = W_j, (i,j) \in E$ in problem \eqref{eq:W_i_optimization}. 
\end{rem}
To overcome the issue stated in Remark \ref{rem:admm_serial} and to enable `parallel' implementation of ADMM, we introduce dummy variables\footnote{This trick is taken from \cite[Section~3.4]{10.5555/59912}.} $X_{ij}, X_{ji}$ for each $(i,j) \in E$ in \eqref{eq:W_i_optimization}. Now, we get
\begin{align}
    \label{eq:optimization_xij's}
    & \mathrm{minimize} & & \sum_{i=1}^n\frac{||W_i - \one \one^T/n||}{n} \tag{P4} \\
    & \mathrm{subject \ to} & & W_i \one = \one, \ W_i^T \one = \one, \ i \in \{1, \dots, n\}, \nonumber \\
    & & & (W_i)_{ij} = 0, (i,j) \notin E, i \in \{1, \dots, n\}, \nonumber \\
    & & & W_i = X_{ij}, \ W_j = X_{ji}, \ X_{ij} = X_{ji}, \ (i,j) \in E. \nonumber 
\end{align}
Thus, the $W_i$'s corresponding to different agents have now been `decoupled'. Above problem is equivalent to \eqref{eq:W_i_optimization}. To solve this problem using ADMM, we define the augmented Lagrangian for the problem as given in \eqref{eq:aug_lag}. Here, $\rho > 0$ is the penalty parameter and $a_i,b_i,C_{ij},D_{ij}$ are the dual variables for the respective constraints. Algorithm \ref{alg:admm_serial} is the standard ADMM algorithm for \eqref{eq:optimization_xij's}. Note that the constraints $X_{ij} = X_{ji}, (i,j) \in E$ and $(W_i)_{ij} = 0, (i,j) \notin E$ have not been dualized but are incorporated in the primal update steps of the algorithm.

\begin{algorithm}
\caption{ADMM implementation of problem \eqref{eq:W_i_optimization}}
\label{alg:admm_serial}
\begin{algorithmic}
	\STATE \textbf{Initialize:} $\rho > 0, W_i(0) = 0_{n \times n}, X_{ij}(0) = 0_{n \times n}, X_{ji}(0) = 0_{n \times n}, a_i(0) = 0_{n \times 1}, b_i(0) = 0_{n \times 1}, C_{ij}(0) = 0_{n \times n}, D_{ij}(0) = 0_{n \times n}$
	\STATE \textbf{Iterate:} 
	\FOR{$k \geq 0$}
	\STATE \textbf{Primal update:}
	\begin{align*}
	    W_i(k+1) &= \underset{W_i:(W_i)_{ij} = 0, (i,j) \notin E}{\argmin} L_\rho, i \in \{1, \dots, n\} \\
        X_{ij}(k+1) &= \underset{X_{ij}: X_{ij} = X_{ji}}{\argmin} L_\rho,  (i,j) \in E \\
        X_{ji}(k+1) &= \underset{X_{ji}: X_{ij} = X_{ji}}{\argmin} L_\rho, (i,j) \in E
	\end{align*}
	\STATE \textbf{Dual update:} \begin{align}
	    a_i(k+1) &= a_i(k) + \rho(W_i(k+1) \one - \one), i \in \{1, \dots, n\} \nonumber \\
	    b_i(k+1) &= b_i(k) + \rho(W_i(k+1)^T \one - \one), i \in \{1, \dots, n\} \nonumber \\
	    \label{eq:cij_original} 
	    C_{ij}(k+1) &= C_{ij}(k) \nonumber \\
	    & \quad + \rho(W_i(k+1) - X_{ij}(k+1)), (i,j) \in E \\
	    \label{eq:dij_original}
	    D_{ij}(k+1) &= D_{ij}(k) \nonumber \\
	    & \quad + \rho(W_j(k+1) - X_{ji}(k+1)), (i,j) \in E
	\end{align}
	\ENDFOR
\end{algorithmic}
\end{algorithm}

Algorithm \ref{alg:admm_serial} in its original form cannot be run in parallel. One of the reasons for this is that for any agent $i$, the $W_i(k+1)$ update is the $\argmin$ of $L_\rho$ evaluated at $W_1 = W_1(k+1),\dots, W_{i-1} = W_{i-1}(k+1)$. Thus, agent $i$ has to wait for the `previous agents' to perform their updates and hence the issue mentioned in Remark \ref{rem:admm_serial} still exists. However, due to our introduction of $X_{ij}, X_{ji}$'s, with some algebraic manipulations, we can show that Algorithm \ref{alg:admm_parallel} and Algorithm \ref{alg:admm_serial} are equivalent. The former can be executed in a parallel manner as argued in Remark \ref{rem:alg_distributed_manner}.

\begin{lem}
    Algorithm \ref{alg:admm_parallel} is equivalent to Algorithm \ref{alg:admm_serial}.
\end{lem}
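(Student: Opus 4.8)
The plan is to show that the two algorithms generate identical sequences $\{W_i(k),a_i(k),b_i(k)\}_{k\ge 0}$ once the auxiliary edge variables $X_{ij},X_{ji}$ and edge duals $C_{ij},D_{ij}$ of Algorithm~\ref{alg:admm_serial} have been eliminated. The first step is to write the $W_i$-subproblem of Algorithm~\ref{alg:admm_serial} explicitly: collecting from the augmented Lagrangian \eqref{eq:aug_lag} exactly the terms containing $W_i$ (and discarding the additive constants $-\tr(X_{ij}^TC_{ij})$ etc., which do not affect the minimiser) gives precisely the right-hand side of \eqref{eq:wi_initial}, where for agent $i$ we write $C_{ij}$, $j\in N_i$, for the duals of the copy constraints incident to $i$, so that the matrix called $D_{ij}$ in \eqref{eq:aug_lag} is the one we denote $C_{ji}$. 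A key observation here is that $L_\rho$ is separable in the block $\{W_1,\dots,W_n\}$ --- no term of \eqref{eq:aug_lag} contains two distinct matrices $W_i,W_j$ --- so the minimiser $W_i(k+1)$ does not actually depend on the order in which the $W_\ell$'s are updated; it depends only on $a_i(k),b_i(k)$, on $X_{ij}(k)$ for $j\in N_i$, and on the $C_{ij}(k)$.

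The heart of the proof is an induction on $k$ that establishes simultaneously: (i) $C_{ij}(k)=-D_{ij}(k)$ for every $(i,j)\in E$; (ii) $X_{ij}(k)=X_{ji}(k)=\tfrac12\bigl(W_i(k)+W_j(k)\bigr)$ for every $(i,j)\in E$; and (iii) $W_i(k)$, $a_i(k)$, $b_i(k)$ and $M_i(k):=\sum_{j\in N_i}C_{ij}(k)$ coincide with the $k$-th iterates $W_i(k),a_i(k),b_i(k),M_i(k)$ of Algorithm~\ref{alg:admm_parallel}. The base case $k=0$ is immediate since all variables are initialised to $0$. For the inductive step, assume (i)--(iii) at step $k$. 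Then by the first step $W_i(k+1)$ is the minimiser in \eqref{eq:wi_initial}; substituting $X_{ij}(k)=\tfrac12(W_i(k)+W_j(k))$ and $\sum_{j\in N_i}\tr(W_i^TC_{ij}(k))=\tr(W_i^TM_i(k))$ turns \eqref{eq:wi_initial} into exactly the objective of \eqref{eq:wi_final}, so the $W_i(k+1)$ of Algorithm~\ref{alg:admm_serial} equals that of Algorithm~\ref{alg:admm_parallel}; the $a_i$ and $b_i$ updates are literally the same in both algorithms. Next, the $X$-subproblems of Algorithm~\ref{alg:admm_serial}, tied together by $X_{ij}=X_{ji}$, amount to minimising over a single matrix $X$ a strongly convex quadratic whose stationarity condition reads $-C_{ij}(k)-D_{ij}(k)+\rho\bigl(2X-W_i(k+1)-W_j(k+1)\bigr)=0$; by hypothesis (i) the dual terms cancel, leaving $X_{ij}(k+1)=X_{ji}(k+1)=\tfrac12\bigl(W_i(k+1)+W_j(k+1)\bigr)$, which is (ii) at step $k+1$. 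Feeding this into the dual updates \eqref{eq:cij_original}--\eqref{eq:dij_original} gives $C_{ij}(k+1)=C_{ij}(k)+\tfrac{\rho}{2}(W_i(k+1)-W_j(k+1))$ and $D_{ij}(k+1)=D_{ij}(k)+\tfrac{\rho}{2}(W_j(k+1)-W_i(k+1))$; adding these and using (i) at step $k$ yields (i) at step $k+1$, while summing the first over $j\in N_i$ gives $M_i(k+1)=M_i(k)+\tfrac{\rho}{2}\sum_{j\in N_i}(W_i(k+1)-W_j(k+1))$, which is exactly the $M_i$-update of Algorithm~\ref{alg:admm_parallel}. This closes the induction and the equivalence follows.

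The main difficulty I anticipate is bookkeeping rather than anything conceptual: one has to fix an orientation of each undirected edge $\{i,j\}$ --- so that $\{i,j\}$ contributes a copy $X_{ij}$ attached to $i$ with dual $C_{ij}$ and a copy $X_{ji}$ attached to $j$ with dual $D_{ij}$ (the latter being ``$C_{ji}$'' in the per-agent notation) --- and then verify that the sums $\sum_{j\in N_i}(\cdot)$ in \eqref{eq:wi_initial}, \eqref{eq:wi_final} and the $M_i$-update genuinely collect the edge terms of both orientations at agent $i$. It is also worth noting explicitly that this elimination is precisely what removes the serialisation obstruction of Remark~\ref{rem:admm_serial}: after the substitution, $W_i(k+1)$ is expressed using only the previous-iteration data $W_i(k)$, $W_j(k)$ for $j\in N_i$ (which agent $i$ already holds after the neighbour exchange) together with its locally maintained duals $a_i,b_i,M_i$, so the updates of all agents can be performed simultaneously.
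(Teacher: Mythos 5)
Your proof is correct and follows essentially the same route as the paper: eliminate the edge variables via the $X_{ij}$-stationarity condition, use the zero initialization to get $C_{ij}(k)+D_{ij}(k)=0$ for all $k$, deduce $X_{ij}(k)=\tfrac12(W_i(k)+W_j(k))$, and aggregate the edge duals into $M_i(k)=\sum_{j\in N_i}C_{ij}(k)$. Your only additions are organizational --- casting the computation as an explicit induction and stating outright that $L_\rho$ is separable in the $W$-block (so Gauss--Seidel and Jacobi updates of the $W_i$'s coincide), a point the paper uses only implicitly.
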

\begin{proof}

We will simplify the primal and dual update steps in Algorithm \ref{alg:admm_serial} to arrive at Algorithm \ref{alg:admm_parallel}.

Consider an arbitrary $(i,j) \in E$. The primal update $X_{ij}(k+1)$ can be evaluated by equating the gradient of $L_\rho$ with respect to $X_{ij}$ to zero. We enforce the constraint $X_{ij} = X_{ji}$ when evaluating the gradient. This gives the following equation.
\begin{align}
    \label{eq:grad_wrt_xij}
    & -C_{ij}(k) - D_{ij}(k) + \rho [X_{ij}(k+1) - W_i(k+1) \nonumber \\
    & \qquad + X_{ij}(k+1) - W_j(k+1)] = 0
\end{align}
This implies
\begin{align}
    \label{eq:xij_with_cd}
    & X_{ij}(k+1) = X_{ji}(k+1) \nonumber \\
    & \qquad = \frac{C_{ij}(k) + D_{ij}(k)}{2 \rho} + \frac{W_i(k+1) + W_j(k+1)}{2}.
\end{align}
Substituting these in the dual updates \eqref{eq:cij_original}, \eqref{eq:dij_original}, we have
\begin{align}
    \label{eq:cij_initial}
    & C_{ij}(k+1) = C_{ij}(k) \nonumber \\
    & \qquad + \rho\bigg(\frac{W_i(k+1) - W_j(k+1)}{2} - \frac{C_{ij}(k) + D_{ij}(k)}{2 \rho}\bigg), \\
    \label{eq:dij_initial}
	& D_{ij}(k+1) = D_{ij}(k) \nonumber \\
	& \qquad + \rho\bigg(\frac{W_j(k+1) - W_i(k+1)}{2} - \frac{C_{ij}(k) + D_{ij}(k)}{2 \rho}\bigg).
\end{align}
Adding the above two equations, we get
\begin{align*}
    C_{ij}(k+1) + D_{ij}(k+1) = 0.
\end{align*}
Initializing $C_{ij}(0) = D_{ij}(0) = 0$ implies 
\begin{align}
    \label{eq:cij+dij=0}
    C_{ij}(k) + D_{ij}(k) = 0 
\end{align}
for all $k \geq 0$. Substituting this back in \eqref{eq:cij_initial}, \eqref{eq:dij_initial} gives
\begin{align}
    \label{eq:cij_final}
    C_{ij}(k+1) &= C_{ij}(k) + \rho\bigg(\frac{W_i(k+1) - W_j(k+1)}{2}\bigg), \\
    \label{eq:dij_final}
	D_{ij}(k+1) &= D_{ij}(k) +  \rho\bigg(\frac{W_j(k+1) - W_i(k+1)}{2}\bigg).
\end{align}
Further, substituting $C_{ij}(k) + D_{ij}(k) = 0$ in \eqref{eq:xij_with_cd} gives
\begin{align}
    \label{eq:xij}
    X_{ij}(k+1) = X_{ji}(k+1) = \frac{W_i(k+1) + W_j(k+1)}{2}.
\end{align}

Next, consider an arbitrary $i \in \{1, \dots, n\}$. Then, the update $W_i(k+1)$ in Algorithm \ref{alg:admm_serial} can be evaluated as given in \eqref{eq:wi_initial}. Here, in finding the $\argmin$ of $L_\rho$, we have ignored those terms in $L_\rho$ which are independent of $W_i$.

Now, in \eqref{eq:wi_initial}, we can substitute $X_{ij}(k)$ from \eqref{eq:xij}. Further, we can simplify the term $\sum_{j \in N_i} \tr(W_i^T C_{ij}(k))$ as follows. Define $M_i(k) = \sum_{j \in N_i} C_{ij}(k)$. Then, 
\begin{align*}
    M_i(k+1) &= \sum_{j \in N_i} C_{ij}(k+1) \nonumber \\
    & = \sum_{j \in N_i} \bigg[C_{ij}(k) + \rho\bigg(\frac{W_i(k+1) - W_j(k+1)}{2}\bigg)\bigg] \nonumber \\
    & = M_i(k) + \rho \sum_{j \in N_i} \bigg(\frac{W_i(k+1) - W_j(k+1)}{2}\bigg)
\end{align*}
and $\sum_{j \in N_i} \tr(W_i^T C_{ij}(k)) = \tr(W_i^T M_i(k))$. Thus, $W_i(k+1)$ can be computed as given in \eqref{eq:wi_final}.

In conclusion, going from Algorithm \ref{alg:admm_serial} to Algorithm \ref{alg:admm_parallel}, the primal variables $X_{ij}(k),X_{ji}(k)$ are no longer required to be maintained explicitly. Further, the dual variables $C_{ij}(k),D_{ij}(k)$ have been combined into $M_i(k)$. The other dual variables $a_i(k),b_i(k)$ remain unchanged. 

\end{proof}

\section{Stopping criterion for Algorithm \ref{alg:admm_parallel}}
\label{sec:appendix_stop_crit}

We derive a stopping criterion for Algorithm \ref{alg:admm_parallel}. A good stopping criterion for the ADMM of a constrained optimization problem is when the optimality conditions of the problem are satisfied with some tolerance \cite{10.1561/2200000016}. The optimality conditions for problem \eqref{eq:optimization_xij's} are that the gradient of the \textit{unaugmented} Lagrangian $L_0$ ($\rho = 0$ in \eqref{eq:aug_lag})
with respect to the primal variables $W_i$ and $X_{ij}$ must be zero and the constraints must be satisfied, i.e., for all $i \in \{1, \dots, n\},$
\begin{align}
    & \frac{\partial}{\partial W_i} L_0 = 0, \ \frac{\partial}{\partial X_{ij}} L_0 = 0, j \in N_i, \nonumber \\
    & W_i \one = \one, \ W_i^T \one = \one, \nonumber \\
    \label{eq:optimality_conditions}
    & (W_i)_{ij} = 0, j \notin N_i, \ W_i = W_j, j \in N_i.
\end{align}
We check under what conditions do the primal and dual variables in Algorithm \ref{alg:admm_serial} satisfy these conditions.

First, we show that the conditions $(\partial/\partial W_i) L_0 = 0$ and $(\partial/\partial X_{ij}) L_0 = 0$ are always satisfied by the primal and dual variable iterates $W_i(k+1),a_i(k+1),b_i(k+1),C_{ij}(k+1),D_{ij}(k+1)$ for all $k \geq 0$.

By definition of $W_i(k+1)$ as given in Algorithm \ref{alg:admm_serial}, we know that $W_i(k+1)$ minimizes $L_\rho$ given in \eqref{eq:aug_lag}. Hence, $(\partial/\partial W_i) L_\rho$ evaluated at $W_i = W_i(k+1)$ must be zero. This means 
\begin{align*}
    & \frac{\partial}{\partial W_i} \bigg(\frac{1}{n} ||W_i - \one \one^T / n||\bigg)\bigg\rvert_{W_i = W_i(k+1)} + a_i(k) \one^T \\
    & + \one b_i(k)^T + \sum_{j \in N_i} C_{ij}(k) + \rho \bigg[(W_i(k+1) \one - \one)\one^T \nonumber \\
    & + \one(W_i(k+1)^T \one - \one) + \sum_{j \in N_i} (W_i(k+1) - X_{ij}(k))\bigg] = 0.
\end{align*}
From the dual update equations $a_i(k+1),b_i(k+1),C_{ij}(k+1)$ as given in Algorithm \ref{alg:admm_serial}, we can combine some terms in the above equation to get
\begin{align*}
    & \frac{\partial}{\partial W_i} \bigg(\frac{1}{n} ||W_i - \one \one^T / n||\bigg)\bigg\rvert_{W_i = W_i(k+1)} + a_i(k+1) \one^T \\
    & + \one b_i(k+1)^T + \sum_{j \in N_i} C_{ij}(k+1) = 0.
\end{align*}
Above is precisely the condition $(\partial/\partial W_i) L_0 = 0$ evaluated at $W_i(k+1),a_i(k+1),b_i(k+1),C_{ij}(k+1)$. Thus, $(\partial/\partial W_i) L_0 = 0$ is always satisfied by the primal and dual variable iterates.

Now, consider the second condition $(\partial/\partial X_{ij}) L_0 = 0$. Evaluating the gradient of $L_0$, with the constraint that $X_{ij} = X_{ji}$, this condition can be written as $C_{ij} + D_{ij} = 0$. From \eqref{eq:cij+dij=0}, we know that this is true for $C_{ij}(k),D_{ij}(k)$ for all $k \geq 0$. Thus, $(\partial/\partial X_{ij}) L_0 = 0$ is always satisfied by the primal and dual iterates of the algorithm.

Thus, from the optimality conditions \eqref{eq:optimality_conditions}, only the constraint conditions are \emph{not} satisfied by the iterates of the variables in Algorithm \ref{alg:admm_serial}. Define the residuals of the constraints as
\begin{align}
    \label{eq:residuals}
    r^1_i(k) &= ||W_i(k) \one - \one||/\sqrt{n}, \nonumber \\
    r^2_i(k) &= ||W_i(k)^T \one - \one||/\sqrt{n}, \nonumber \\
    r^3_{ij}(k) &= ||W_i(k) - W_j(k)||_F/n, j \in N_i, \nonumber \\ 
    r^4_{ij}(k) &= |(W_i(k))_{ij}|, j \notin N_i,
\end{align}
for all $i \in \{1, \dots, n\}$. Here, we have divided by $n$ or $\sqrt{n}$ to compensate for the dimension of the quantity inside the norm. Now, we define the maximum residual at agent $i$ as
\begin{align}
    \label{eq:max_residual}
    R_i(k) = \max \{r^1_i(k), r^2_i(k), r^3_{ij}(k), r^4_{ij}(k): j \in N_i\}.
\end{align}
Then, the optimality conditions \eqref{eq:optimality_conditions} are satisfied with an $\epsilon$ tolerance for some $k$ if 
\begin{align}
    \label{eq:stopping_crit}
    R_i(k) \leq \epsilon \text{ for all } i \in \{1, \dots, n\}.
\end{align}
This is the stopping criterion for Algorithm \ref{alg:admm_parallel}. Remark \ref{rem:stop_crit} explains how this stopping criterion is used in the algorithm.

\section{Proof of Lemma \ref{lem:convex_relaxation}}
\label{sec:appendix_proof_conv_relax_lemma}

First, we recall the definition of a primitive matrix.
\begin{defn}
A non-negative matrix $W$ is said to be primitive with index $m \geq 1$ if $W^m > 0$.
\end{defn}
Primitive matrices are a special class of matrices which have all but one eigenvalue strictly within the circle described by the spectral radius of the matrix. We make use of this fact whose proof can be found in \cite[(8.3.16)]{meyer2000matrix}.
\begin{prop}{\cite{meyer2000matrix}}
\label{prop:primitive}
Suppose $W$ is a non-negative matrix which satisfies $W \one = \one, W^T \one = \one$. Then, $W$ is primitive if and only if $\rho(W - \one \one^T / n) < 1$.
\end{prop}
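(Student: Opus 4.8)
The plan is to reduce both implications to a single algebraic identity linking the powers of $W$ to those of $W - \one\one^T/n$, and then to use the textbook equivalence ``$\rho(A) < 1$ if and only if $A^k \to 0$''. Write $P := \one\one^T/n$, and note first that $\one^T\one = n$ gives $P^2 = P$, while the hypotheses $W\one = \one$ and $\one^T W = \one^T$ (the latter being $W^T\one = \one$ transposed) give $WP = PW = P$. With these in hand an easy induction establishes the identity
\[
(W - P)^k = W^k - P \quad\text{for all } k \geq 1,
\]
where the inductive step uses $W^k P = P$ and $P^2 = P$ to cancel the two cross terms in $(W^k - P)(W - P)$.

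For the direction $\rho(W - \one\one^T/n) < 1 \Rightarrow W$ primitive I would argue directly and without Perron--Frobenius. Since $\rho(W - P) < 1$ we have $(W - P)^k \to 0$, so the identity above yields $W^k \to P$. Every entry of $P$ equals $1/n > 0$, hence for all sufficiently large $k$ every entry of $W^k$ is strictly positive; picking any such $m$ gives $W^m > 0$, i.e.\ $W$ is primitive.

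For the converse, $W$ primitive $\Rightarrow \rho(W - \one\one^T/n) < 1$, I would invoke the quoted Perron--Frobenius fact \cite[(8.3.16)]{meyer2000matrix}: a primitive matrix has a simple eigenvalue equal to its spectral radius that strictly dominates every other eigenvalue in modulus. As $W$ is nonnegative and doubly stochastic, $\rho(W) = 1$ with Perron eigenvector $\one$, so every other eigenvalue $\lambda$ of $W$ satisfies $|\lambda| < 1$. It then remains to check that passing from $W$ to $W - P$ removes exactly this dominant eigenvalue: because $\one^T W v = \one^T v$ shows $W$ maps $\one^\perp$ into itself, both $W$ and $P$ leave the splitting $\R^n = \mathrm{span}(\one) \oplus \one^\perp$ invariant, and on it $W - P$ acts as $0$ on $\mathrm{span}(\one)$ and as $W$ on $\one^\perp$. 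Hence $\sigma(W - P) = \{0\} \cup \big(\sigma(W)\setminus\{1\}\big)$, and since the eigenvalue $1$ is simple this gives $\rho(W - P) = \max\{|\lambda| : \lambda \in \sigma(W),\, \lambda \neq 1\} < 1$. Alternatively, the convergence form of Perron--Frobenius gives $W^k \to P$ directly, whence $(W - P)^k = W^k - P \to 0$ and $\rho(W - P) < 1$.

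I expect the only delicate point to be this last spectral bookkeeping in the converse: one must verify that the eigenvalue $1$ of $W$ lives solely in the $\one$ direction and is simple, so that restricting to $\one^\perp$ excises precisely one copy of it and leaves the remaining spectrum --- all of modulus strictly below $1$ --- untouched. The key identity $(W-P)^k = W^k - P$ is what makes everything else routine.
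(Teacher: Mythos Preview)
Your argument is correct, but there is nothing in the paper to compare it against: the paper does not prove Proposition~\ref{prop:primitive} at all --- it simply quotes the result from \cite[(8.3.16)]{meyer2000matrix} and moves on. What you have written is essentially a self-contained reconstruction of the standard textbook proof: the identity $(W-P)^k = W^k - P$ (valid because $WP = PW = P = P^2$) reduces both directions to the equivalence ``$\rho(A)<1 \iff A^k\to 0$'', with the converse needing one input from Perron--Frobenius (simplicity and strict dominance of the Perron eigenvalue for a primitive matrix, or equivalently $W^k\to P$). Your spectral bookkeeping via the $W$-invariant splitting $\R^n=\mathrm{span}(\one)\oplus\one^\perp$ is clean and correct, since double stochasticity makes $\one^\perp$ invariant and primitivity forces the eigenvalue $1$ to be simple; the alternative route through $W^k\to P$ is even shorter and avoids that bookkeeping entirely.
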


Given that the graph $G$ is connected, using the above result, we now construct a weight matrix $W$ which satisfies the consensus condition \eqref{eq:W_consensus_condition}. We will use this result to prove Lemma \ref{lem:convex_relaxation}.
\begin{lem}
    \label{lem:connected_G_weight_matrix}
    Given $G$ is a connected graph, consider a non-negative weight matrix $W$ which satisfies 
    \begin{align}
        \label{eq:primitive_W_for_conn_G}
        & W \one = \one, \ W^T \one = \one, \ W_{ij} > 0, (i,j) \in E.
    \end{align}
    Then, $\rho(W - \one \one^T/n) < 1$.
\end{lem}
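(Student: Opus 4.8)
The plan is to deduce the spectral bound from Proposition \ref{prop:primitive}: since $W$ is non-negative and satisfies $W\one = \one$ and $W^T\one = \one$, it suffices to show that $W$ is \emph{primitive}, i.e., that $W^m > 0$ for some $m \geq 1$. Once primitivity is established, Proposition \ref{prop:primitive} immediately gives $\rho(W - \one\one^T/n) < 1$.

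First I would record the consequences of hypothesis \eqref{eq:primitive_W_for_conn_G} under the convention $(i,i) \in E$ adopted in Section \ref{sec:prob_form}: one has $W_{ij} > 0$ whenever $(i,j) \in E$, and in particular $W_{ii} > 0$ for every $i \in \{1,\dots,n\}$. (The case $n = 1$ is trivial, since then $W = [1]$ and $W - \one\one^T/n = 0$.) Next, fix an arbitrary ordered pair $i,j$ with $n \geq 2$. Because $G$ is connected, there is a path $i = v_0, v_1, \dots, v_\ell = j$ in $G$ with $\ell \leq n-1$. Padding this path with $n-1-\ell$ self-loops at the endpoint $j$ produces a walk $i = v_0 \to v_1 \to \dots \to v_\ell = j \to j \to \dots \to j$ of length \emph{exactly} $n-1$, every step of which corresponds to an edge of $G$ and hence to a strictly positive entry of $W$. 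Consequently
\begin{align}
    (W^{n-1})_{ij} \;\geq\; W_{v_0 v_1} W_{v_1 v_2} \cdots W_{v_{\ell-1} v_\ell}\,(W_{jj})^{\,n-1-\ell} \;>\; 0. \nonumber
\end{align}
Since $i,j$ were arbitrary, $W^{n-1} > 0$, so $W$ is primitive (with index $m = n-1$), and Proposition \ref{prop:primitive} applies to give $\rho(W - \one\one^T/n) < 1$.

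I do not expect a genuine obstacle here; the only point needing mild care is producing walks of a \emph{common} length $n-1$ between every ordered pair of vertices, which is exactly where the positive self-loops $W_{ii} > 0$ are used, together with checking that the displayed lower bound on $(W^{n-1})_{ij}$ invokes only entries already known to be positive. An equivalent route, if one prefers to avoid the explicit index, is to note that $W$ is irreducible (immediate from connectedness of $G$ and $W_{ij}, W_{ji} > 0$ on edges) and has a positive diagonal entry, hence is primitive by the standard Perron–Frobenius characterization; this again feeds into Proposition \ref{prop:primitive} to finish the proof. This lemma will then be used, via the existence of such a $W$ achieving a convergence factor strictly below $1$, to establish Lemma \ref{lem:convex_relaxation}.
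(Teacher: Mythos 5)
Your proof is correct and follows essentially the same route as the paper: establish $W^{n-1} > 0$ (primitivity) from connectedness of $G$ together with the convention $(i,i) \in E$, then invoke Proposition \ref{prop:primitive}. In fact you are slightly more careful than the paper at the one delicate step — padding a path of length $\ell \leq n-1$ with self-loops to obtain a walk of length exactly $n-1$, which is what justifies passing from ``some power $m_{ij} \leq n-1$ works for each pair'' to ``the single power $n-1$ works for all pairs.''
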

\begin{proof}
We prove this result by showing that $W$ is a primitive matrix, i.e., we show that $\exists m \geq 1$ such that $W^m > 0$. 

For a given pair of nodes $(i,j)$, consider any $m_{ij} \geq 2$. Then, $$(W^{m_{ij}})_{ij} = \sum_{l_1=1}^n \sum_{l_2=1}^n \dots \sum_{l_{m_{ij}-1}=1}^n W_{il_1} W_{l_1l_2} \dots W_{l_{m_{ij}-1}j}.$$ 

Note that for $m_{ij} = 1$, $(W^{m_{ij}})_{ij} = W_{ij}$. Thus, $(W^{m_{ij}})_{ij}$ is the sum of the product of weights of all edges which are part of all paths of length at most $m_{ij}$ between nodes $(i,j)$ of the graph. Since the graph $G$ is connected, there is a path (of length at most $n-1$) between every pair of nodes $(i,j)$, i.e., for all $(i,j) \in \{1, \dots, n\}^2, \exists m_{ij} \in \{1, \dots, n-1\}$ such that $(W^{m_{ij}})_{ij} > 0$. Hence, we can conclude that $W^{n-1} > 0$, i.e., $W$ is primitive with an index of (at most) $n-1$.

Now, by Proposition \ref{prop:primitive}, we have $\rho(W - \one \one^T/n) < 1$.

\end{proof}

We are now ready to prove Lemma \ref{lem:convex_relaxation}. We have to show that any solution $W^*$ of \eqref{eq:W_optimization_2-norm} satisfies $\rho(W^* - \one \one^T/n) < 1$. Since $\rho(A) \leq ||A||$ for any matrix $A \in \R^{n \times n}$, it is sufficient to show that $||W^* - \one \one^T/n|| < 1$. Moreover, since $W^*$ is the optimal value of \eqref{eq:W_optimization_2-norm}, it is sufficient to show that, given $G$ is connected, $\exists W \in \R^{n \times n}$ which satisfies the constraints $$W \one = \one, \ W^T \one = \one, \ W_{ij} = 0, (i,j) \notin E$$ of \eqref{eq:W_optimization_2-norm} such that $||W - \one \one^T/n|| < 1$. We construct such a $W$ as follows. Consider a $W$ which satisfies the condition \eqref{eq:primitive_W_for_conn_G}. Then, we have
\begin{align*}
    ||W - \one \one^T/n||^2 & = \rho\big((W - \one \one^T/n)(W - \one \one^T/n)^T\big) \\
    & = \rho(WW^T - \one \one^T/n).
\end{align*}
It is easy to verify that $WW^T$ also satisfies the condition \eqref{eq:primitive_W_for_conn_G}. This implies, $\rho(WW^T - \one \one^T/n) < 1$, which implies $||W - \one \one^T/n|| < 1$. 

\bibliography{references}

\bibliographystyle{IEEEtran}

\begin{IEEEbiographynophoto}{Kiran Rokade} received the B.Tech. degree in electrical engineering from V.J.T.I., Mumbai, India, in 2016. and the an M.S. degree in electrical engineering from Indian Institute of Technology Madras, Chennai, India in 2020. Currently, he is pursuing his PhD at Cornell University. His research interests include multi-agent systems, optimization theory and game theory. 
\end{IEEEbiographynophoto}

\begin{IEEEbiographynophoto}{Rachel Kalpana Kalaimani}
received the B.E. degree in electrical and electronics engineering from P.S.G. College of Technology, Coimbatore, India, in 2009 and the Ph.D. degree in control from the Department of Electrical Engineering, Indian Institute of Technology Bombay, Mumbai, India, in 2014. After that she was a Post Doctoral Research Fellow at the Department of Electrical and Computer Engineering, University of Waterloo, Waterloo, ON, Canada. She is currently an Assistant Professor at Indian Institute of Technology Madras, Chennai, India. Her current research interests include distributed optimization, networked control systems and complex systems. Other interests include optimization of energy consumption in buildings, model predictive control, graph theoretic techniques, and numerical linear algebra.
\end{IEEEbiographynophoto}

\vfill

\end{document}